\documentclass[twocolumn,longbibliography,nofootinbib]{revtex4-1}
\usepackage{amsmath} 
\usepackage{amsfonts} 
\usepackage{amssymb} 
\usepackage{graphicx}
\usepackage{psfrag}
\usepackage{enumitem}
\usepackage[normalem]{ulem}


\newcommand{\id}{\mathrm{id}}

\newcommand{\suchthat}{\,|\,}
\newcommand{\set}[2]{ \{#1\suchthat #2\} }

\newcommand{\sset}[1]{ \{#1\} }
\newcommand{\tr}{\mathrm{Tr}}

\newcommand{\sub}{\subseteq}

\newcommand{\compt}{\diamond}

\newcommand{\stoc}[2]{(#1,#2)}
\newcommand{\stocg}[2]{(#1,#2)_\mathbf g}
\newcommand{\synd}[1]{\overline{#1}}
\newcommand{\chop}[1]{\mathrm {synd}(#1)}
\newcommand{\chan}[1]{{#1}}
\newcommand{\schan}[1]{\mathcal{#1}}
\newcommand{\cchan}[1]{\mathbf {#1}}
\newcommand{\appr}[1]{\triangleleft_{#1}}
\newcommand{\dist}[2]{\mathrm{d}({#1},{#2})}
\newcommand{\fail}[1]{\mathrm {fail}\,(#1)}
\newcommand{\supp}[1]{\mathrm {supp}\,#1}

\newcommand{\qed}{\hspace*{\fill}$\blacksquare$}
 
 \newtheorem{thm}{Theorem}
 \newtheorem{lem}[thm]{Lemma}
 
 \newtheorem{defn}[thm]{Definition}
 \newtheorem{prop}[thm]{Proposition}
 \newenvironment{proof}{\noindent \emph{Proof.}}{\qed}


\begin{document}

\title{Resilience to time-correlated noise in quantum computation}

\author{H\'ector Bomb\'in}
\affiliation{Yukawa Institute for Theoretical Physics, Kitashirakawa Oiwakecho, Sakyo-ku, 606-8502 Kyoto}
\altaffiliation{
Center for Gravitational Physics, Yukawa Institute for Theoretical Physics, Kyoto University
}

\begin{abstract}
Fault-tolerant quantum computation techniques rely on weakly correlated noise. 
Here I show that it is enough to assume weak spatial correlations: time correlations can take any form. 
In particular, single-shot error-correction techniques exhibit a noise threshold for quantum memories under spatially local stochastic noise.
\end{abstract}

\pacs{03.67.Lx, 03.67.Pp}

\maketitle

\section{Introduction}

Building a quantum computer amounts to attain exquisite control of a large quantum system.
In particular, it requires quantum systems that (i) are isolated from the rest of the universe and (ii) can be arbitrarily manipulated, all with any desired accuracy.
\emph{A priori}, quantum systems naturally displaying this unlikely combination of characteristics might exist~\cite{bombin:2013:self}.
Unfortunately, for the time being no such systems are known~\cite{brown:2014:quantum}.
Instead, one has to deal with decoherent quantum systems and innacurate control, \emph{i.e.} with noise.

The great success of the pioneering works in fault-tolerant quantum computation~\cite{shor:1996:ftqc,aharonov:1997:ftqc,kitaev:1997:quantum, knill:1998:resilient} was to show that a small amount of noise does not represent an insurmountable difficulty:
As long as noise is weak and weakly-correlated in space and time, quantum computations of arbitrary accuracy and complexity can be carried out.
The purpose of this paper is to drop one of the conditions imposed on noise, by allowing arbitrary correlations in time.
The motivation is to extend the class of physical systems potentially useful for quantum computation.

\subsection{Fabrication faults}

Suppose that a quantum computer is built out of many interconnected pieces or `nodes', each node holding a low-dimensional and noisy quantum system, each connection allowing the controlled (but noisy) interaction of the corresponding quantum systems.
The system is designed to implement some fault-tolerant computation scheme, so that when all the nodes (and connections) work as expected, a certain degree of noise is tolerated.

This work originally grew out of an interest in the following question: what will happen if a fraction of the nodes fail permanently? Are there fault-tolerant schemes resilient to such `fabrication faults'?
The answer could, in principle, depend on important details, such as whether the fabrication faults are known or not, or how flexible the operation of the system is. 
However, the study of general time-correlated noise automatically encompasses all such possible situations: 
If a system can deal with such correlations, then it can also deal with unknown fabrication faults, with no need for alternative operations.

In fact, one can go far beyond fixed fabrication faults, and imagine faults that fluctuate slowly over time and introduce significant time-correlated noise. 
There is no guarantee that a conventional approach to fault-tolerant quantum computation could function within such conditions. 
Thus, demonstrating fault-tolerant techniques that are compatible with time-correlated noise opens the door to the use of physical systems exhibiting such problematic behavior. 

\subsection{Topological codes}

Topological quantum error correction~\cite{dennis:2002:tqm} is a popular approach to fault tolerance that emphasizes the locality of interactions in the above setting: nodes form a lattice of a certain spatial dimensionality, with connections between nearby nodes only.
For practical reasons, it is desirable that the spatial dimensionality of the lattice is as low as possible, which puts the focus on 2D topological codes such as toric codes~\cite{kitaev:1997:quantum} and color codes~\cite{bombin:2006:2DCC}.

It is not known whether 2D stabilizer codes are generally resilient to fabrication faults, even when the locations of the faults are known and it is possible to change the local operation of the physical qubits, but the problem is already under study~\cite{nagayama:2016:surface}. 
On the other hand, it will be shown below that there exist 3D topological schemes, namely 3D gauge color codes~\cite{bombin:2015:gauge}, capable of dealing not only with (dilute enough) unknown fabrications faults, but also with noise exhibiting arbitrary correlations in time.
It is possible to argue, see appendix~\ref{app:difficulties}, that 2D codes are not resilient to noise with such correlations, even if they have a Markovian origin.

\subsection{Single-shot error correction}

In a fault-tolerant quantum computer information is protected by means of redundancy: instead of using all available degrees of freedom for storing information, some (or most) of them are used to absorb the damage inflicted by noise.
This cannot work indefinitely, because the effect of noise piles up over time.
Therefore a process known as error correction must be performed repeatedly to flush away the errors inflicted to the system, making room to subsequently absorb more errors.

Error correction consists of two stages.
The first stage identifies the errors most likely suffered by the system, and the second one corrects them.
It is critical that errors are identified correctly, because otherwise the correction step might end up being counterproductive.
In particular, in the first stage a number of operators are measured, providing an `error syndrome'. 
In the conventional approach, to guarantee that the error syndrome is not too noisy these measurements are repeated a number of times.
This is, in particular, the approach taken for 2D topological codes.

In the presence of time-correlated noise, repeating measurements ceases to be viable and alternative approaches are needed.
Single-shot error correction is such an alternative~\cite{bombin:2015:single-shot}.
It relies on an unusual feature of some error-correcting schemes, for which errors occurring at the measuring stage that are localized in space give rise to errors induced by the correction stage that are similarly localized in space.
This means that error correction can be performed directly in a single step (the `single shot'), with no need to repeat measurements to avoid errors. 
In fact, in the case of 3D gauge color codes~\cite{bombin:2015:gauge} much more is true, since single-shot error correction enables the performance of arbitrary elementary computational operations in constant time, irrespective of the system size. 
This is in stark contrast to most fault-tolerant settings, where some operations will incur in a time-overhead proportional to the amount of errors that the system can correct. 

The purpose of this work is to show that single-shot error correction is compatible with time-correlated noise. 
In particular, this will be demonstrated for the error-correction schemes considered in~\cite{bombin:2015:single-shot}.
More general approaches to singles-shot error correction are conceivable, as discussed in appendix~\ref{app:extrinsic}.

\subsection{Spatially local stochastic noise}

For concreteness, consider a quantum computation performed on a number of qubits.
A computation is divided into time steps.
Ideally, at each step a number of separate processes happen simultaneously, each involving only a bounded number of qubits.
Each of this processes is called a location.
In modeling fault-tolerant quantum computation, the resulting ideal process of computation is deformed to introduce noise.
Although more sophisticated approaches exist~\cite{terhal:2005:fault,aliferis:2006:quantum}, here noise will be modeled stochastically~\cite{aharonov:1997:ftqc,knill:1998:resilient}, as explained next.
This approach is ideally suited to explore the effects of strongly time-correlated noise.

In the stochastic noise model it is assumed that a given set of locations fails with some probability, while the rest behave ideally.
In particular, at a given time step the qubits that belong to faulty locations undergo a possibly unknown process, possibly involving the environment.

An essential assumption in fault-tolerant quantum computation is that noise is both weak and local: weakly correlated in space and time.
In the stochastic model the locality of noise is implemented as follows: The probability that any given set of $n$ locations fails, irrespective of what happens at other locations, is bounded by $p^n$ for some error rate $p$.

If nothing is known about time correlations, the only meaningful constraints on the distribution of faulty locations have to involve simultaneous locations, \emph{i.e.} belonging to the same time step.
Here a spatially local stochastic noise model will be adopted: The probability that any given set of $n$ simultaneous locations fails is bounded by $p^n$ for some error rate $p$.

Fabrication faults can also be naturally modeled in the stochastic noise model. 
A given set of fabrication faults can be related to a set of qubits: For every time step, all locations involving any of those qubits fail.
If the fabrication faults are fixed, this does not yield a spatially local distribution of faults.
However, the fabrication faults will typically be randomly distributed.
In particular, if the fabrication faults follow a certain spatial distribution (such as a local one), the faulty locations will follow a closely related spatial distribution.
This is how the study of fabrication faults fits into the study of stochastic noise with arbitrary time correlations. 
It suffices to focus on the latter, which is also much more general.

\subsection{Results}

The focus of this paper is to show that the single-shot error-correction techniques introduced in~\cite{bombin:2015:single-shot} are compatible with spatially local stochastic noise.
In particular, the corresponding quantum memories exhibit a noise threshold under local stochastic noise, in the following sense.
In a quantum memory error correction is repeated time after time~\cite{dennis:2002:tqm}, with no actual computation.
Any fault-tolerant scheme always involves a family of schemes requiring a varying number of qubits: if more accuracy is desired, a larger set of qubits is needed.
Quantum memories have a threshold $p_0$ for a given family of schemes and a noise model with parameter $p$ when, for any $p<p_0$, quantum information can be stored with arbitrary accuracy for arbitrarily long times, by choosing the right scheme in the family.

Even though only quantum memories will be discussed, there seems to be no difficulty in extending the results to general quantum computations, at least in the case of fault-tolerant schemes that use 3D gauge color codes~\cite{bombin:2015:gauge}.
The reason is that in the case of 3D gauge color codes the computation can be performed by alternating operations involving a single time step and single-shot error correction~\cite{bombin:2015:single-shot}.
Moreover, the computations can be performed in a network of qubits with 3 spatial dimensions~\cite{bombin:2016:dimensional}, so that, in principle, such a scheme could be realized with spatially local interactions.
For a numerical study of error thresholds in gauge color codes, see~\cite{brown:2016:fault}.

\section{Error correction}\label{sec:ec}

This section introduces some basic notions of quantum error correction and fault-tolerant quantum computation.

\subsection{Operator quantum error correction}\label{sec:oqec}

The purpose of quantum error-correction techniques~\cite{lidar:2013:quantum} is to send quantum information as faithfully as possible through a noisy quantum channel $E$.
This is achieved by choosing an encoding channel $C$ and a decoding channel $D$ such that the composed channel
\begin{equation}
DEC
\end{equation}
has as little noise as possible.
There is a price to be paid: the composed channel operates on a system with smaller dimensionality.

In operator quantum error correction~\cite{kribs:2006:operator} the Hilbert space of the quantum system where the channel $E$ operates is structured as follows:
\begin{equation}\label{eq:H}
H=H_0\oplus H',\qquad 
H_0 =  H_{\mathrm{ logical}}\otimes H_{\mathrm {gauge}}.
\end{equation}
The subspace $H_0$ is the code subspace.
It is divided into two subsystems, logical and gauge.
Quantum information is encoded in the logical subsystem, whereas the gauge subsystem is not to be protected and will typically be in a random state.
Although gauge degrees of freedom are not particularly interesting at the level of purely ideal error correction, they are surprisingly useful in the context of fault-tolerant quantum computation, both within topological techniques~\cite{bombin:2010:subsystem,bombin:2015:gauge,bombin:2015:single-shot,bombin:2016:dimensional,bravyi:2013:subsystem,bravyi:2015:doubled,jochym:2015:stacked} and elsewhere~\cite{bacon:2006:quantum,paetznick:2013:universal,anderson:2014:fault,bacon:2015:sparse}.

Given the above decomposition of $H$, the encoding channel $\chan C$ maps states on $H_{\mathrm {logical}}$ to states on $H$, and $D$ does the converse.
The encoding channel $C$ maps a logical state $\rho$ to the encoded state
\begin{equation}\label{eq:C}
C(\rho):=\rho\otimes \tau
\end{equation} 
on the code subspace $H_0$, with $\tau$ some (fixed) state of the gauge degrees of freedom.
The decoding channel $D$ takes the form
\begin{equation}\label{eq:D}
D(\rho):=\tr_\mathrm{gauge} \,R(\rho),
\end{equation} 
where the error-recovery channel $R$, which operates on the system $H$ and yields states with support on $H_0$, aims to undo the errors introduced by the channel $E$.

The linearity of error correction is rather useful.
For a fixed logical subsystem~\eqref{eq:H} and a fixed decoding channel~\eqref{eq:D}, there exists a subspace $V$ of operators on $H$ that characterizes correctable channels $E$, \emph{i.e.} those for which $DEC$ is the identity channel for any encoding $C$ as in~\eqref{eq:C}.
In particular, a channel $E$ is correctable if and only if its Kraus operators belong to $V$.
Another useful fact is that not all encodings need to be checked: If $DEC$ is the identity channel for the encoding $C$ with a completely random gauge state, then the same is true for any other encoding~\cite{kribs:2006:operator}.

\subsection{Quantum memories}

The purpose of fault-tolerant quantum computation techniques~\cite{lidar:2013:quantum} is to perform quantum computations of arbitrary size and precision by using faulty devices.
At the core of these techniques is error correction.
In fact, during the whole computation the logical information remains encoded in a subsystem as in~\eqref{eq:H}, in the following sense: 
If the ideal decoding operation $D$ were to be applied at any step of the computation, the result would be very close to the intended logical state at that computational step.
To stay within such a regime the errors that accumulate during the computation have to be flushed away regularly.
This can be achieved with a fault-tolerant recovery operation, a noisy analogue of the ideal recovery operation $R$.

The purpose of a quantum memory is to preserve some (logical) quantum state, rather than to compute with it.
A possible approach to this problem is to use a subset of the techniques of fault-tolerant quantum computation.
In particular, it suffices to apply as regularly as needed the fault-tolerant recovery operation.
For any given error-correcting code this only provides a limited memory lifetime, but typically codes come in families and the lifetime can be increased at the expense of using a larger system $H$.

The above encoding and decoding channels $C$, $D$ come handy in analyzing fault-tolerant protocols, and this is also true for quantum memories.
Assume that the effect of the passage of time $t$ in the quantum memory can be modeled as a quantum channel $E_t$.
The usual approach to analyze the quality of the memory is to consider the composed channel
\begin{equation}
DE_tC.
\end{equation}
In other words, one assumes perfect encoding and decoding in order to to analyze the effective noise on the logical degrees of freedom.

\section{Stochastic noise}\label{sec:stochastic}\label{sec:stochastic_noise}

This section introduces a formalism that allows to work comfortably with time-correlated stochastic noise.

\subsection{Preamble}

A quantum circuit on a set of qubits can be divided into time steps, each of which is itself further composed of locations: each location involves only a few qubits, which are either initialized, transformed or measured, with no two simultaneous locations involving the same qubits.
The circuit represents a quantum channel, which is obtained by composing the channels represented by each time step, which in turn are obtained by tensoring the channels represented by the corresponding locations.

In a stochastic noise model any given quantum circuit is modeled as a quantum channel of the form
\begin{equation}\label{eq:channel}
\sum_i{p_i}{ \chan E_i},
\end{equation}
where each $E_i$ is a channel corresponding to a certain set of faults in the circuit, or \emph{fault path}, and $p_i$ is the probability assigned to that particular path.
Each channel $E_i$ is a composition of channels $E_{i,t}$ representing the $t$-th time step.
$E_{i,t}$ behaves as expected on the locations that are not faulty, and the rest of qubits undergo some other process that might be unknown and path dependent\footnote{
The process on the qubits at faulty locations can incorporate an environment, which should vary in size over time: the input and output systems of $E_i$ are the same as for the noiseless circuit.
}.
The probabilities $p_i$ might also not be precisely known either: instead, they are assumed to satisfy some properties.
Because of these uncertainties, one usually deals with a set of channels of the form~\eqref{eq:channel}, rather than a single channel, and has to extract properties that are common to all the channels in the set.

When studying the behavior of a large circuit it is useful to be able to deal with its parts separately.
This section develops a formalism to do precisely this, in particular in the case of arbitrary and unknown time correlations on the fault probabilities.

\subsection{Stochastic channels}

A \emph{stochastic channel}
\begin{equation}\label{eq:sc}
\schan E=\stoc{p_i}{ \chan E_i}
\end{equation}
consists of a probability distribution $p_i$ over a finite collection of channels $\chan E_i$, 
all of which share a given input quantum system and a given output quantum system.
Given such a distribution there exists a quantum channel~\eqref{eq:channel} obtained by applying $\chan E_i$ with probability $p_i$.
But given a quantum channel, in general there will be different distributions that can produce it in this way.
Thus it is important not to confuse the stochastic channel $\stoc{p_i}{\chan E_i}$ with the corresponding channel~\eqref{eq:channel}.
Besides, the stochastic channel can be identified with the expression~\eqref{eq:channel} when the latter is regarded as a \emph{formal} sum, and this is sometimes useful.

It is convenient to identify any channel $\chan E$ with the stochastic channel where $\chan E$ is assigned probability $1$.
When considering distributions $p_{ij}$ over several variables (or similar objects), traces will be denoted with a compact notation:
\begin{equation}
p_{i(j)}:=\sum_{j} p_{ij}.
\end{equation}

A \emph{stochastic class} $\cchan A$ is a set of stochastic channels with shared input and output quantum systems. 
Their interest will become clear in the next section, which discusses noise models as maps taking quantum circuits to stochastic classes.
It is convenient to identify any stochastic channel $\schan E$ with the singleton class $\sset{\schan E}$.

\subsection{Correlated composition}

Suppose that a given circuit is split in time slices $t=1,\dots,n$, and for each time $t$ there is a number of fault configurations that can happen.
For a given fault path, the circuit maps to a channel
\begin{equation}
{\chan A^{(n)}_{i_n}}\cdots{\chan A^{(1)}_{i_1}},
\end{equation}
where $A^{(t)}_i$ is the channel characterizing the time slice $t$ given that the fault $i$ happens.
The effect of the circuit is characterized by some stochastic class $\cchan A$, with elements of the form
\begin{equation}\label{eq:comp_channel}
\schan E= \stoc{p_{i_1\cdots i_n}}{{{\chan A^{(n)}_{i_n}}\cdots{\chan A^{(1)}_{i_1}}}}.
\end{equation}
Eventually the goal is to deal with situations where the only available information concerns the probabilities for \emph{simultaneous} faults to happen.
In the case of the above stochastic class $\cchan A$ this information can be encoded in stochastic classes $\cchan A^{(t)}$, one per time slice $t$.
Namely
the elements of $\cchan A^{(t)}$ are stochastic channels of the form
\begin{equation}\label{eq:comp_channel2}
\stoc {p_{(i_1\cdots i_{t-1})i_t(i_{t+1}\cdots i_n)} }{A^{(t)}_{i_t}}
\end{equation}
for any given $\schan E$ as in~\eqref{eq:comp_channel}.

If only the stochastic classes $\cchan A^{(t)}$ are known, in general $\cchan A$ cannot be reconstructed. 
This lack of knowledge is desirable if we are only interested in properties arising from the $\cchan A^{(t)}$.
In such a situation, it is convenient to consider the most general stochastic class $\cchan A'$ that would yield such $\cchan A^{(t)}$, \emph{i.e.} the unique stochastic class $\cchan A'$ such that $\cchan A\subseteq \cchan A'$ for any $\cchan A$ as above.
This stochastic class $\cchan A'$ is denoted
\begin{equation}\label{eq:correlated_composition}
\cchan A^{(n)}\compt\cdots\compt\cchan A^{(1)}.
\end{equation}
In other words, given the $\cchan A^{(t)}$, the expression \eqref{eq:correlated_composition} is defined as the set of stochastic channels $\schan E$ that can be put in the form~\eqref{eq:comp_channel} with~\eqref{eq:comp_channel2} an element of $\cchan A^{(t)}$ for each $t$.

A priori, this definition introduces an $n$-ary operation on the $\cchan A^{(t)}$ with no further structure.
As proven in appendix \ref{app:associativity}, the corresponding binary operation $\compt$ is associative and when iterated yields the $n$-ary operation.
The binary operation $\compt$ will be called correlated composition.

As a contrast, consider the case of noise completely uncorrelated in time.
Instead of~\eqref{eq:correlated_composition}, one ought to consider 
\begin{equation}\label{eq:uncorrelated_composition}
\cchan A^{(n)}\circ\cdots\circ\cchan A^{(1)},
\end{equation}
defined as the set of stochastic channels $\schan E$ that can be put in the form
\begin{equation}\label{eq:comp_channel_unc}
\schan E= \stoc{p^{(1)}_{i_1}\cdots p^{(n)}_{i_n}}{{{\chan A^{(n)}_{i_n}}\cdots{\chan A^{(1)}_{i_1}}}}, \qquad \stoc{p^{(t)}_{i}}{A^{(t)}_i}\in\cchan A^{(t)}.
\end{equation}
Notice that using stochastic channels is an overkill in this case:
If stochastic channels are regarded as ordinary channels, the composition~\eqref{eq:uncorrelated_composition} is just the set of all channels obtained by composing channels in the sets $\cchan A^{(t)}$.

\subsection{Spatially local noise}

In section~\ref{sec:ec} the quantum system $H$ used to encode quantum information had an entirely abstract nature.
However, in most cases $H$ will be composed of a number of physical subsystems, typically qubits.
In fact, that error correction is possible at all depends on the noise having some sort of structure, and the most common assumption is that noise is  spatially local: errors affecting a large number of subsystems are unlikely.
Let 
$\supp {E}$
denote the set of qubits in which the channel $E$ does not act trivially\footnote{
If there is an environment, only the qubits belonging to the system under control should be considered.
}.
In terms of stochastic classes, spatially local noise can be modeled as follows.
\begin{defn}
Given $\lambda\geq 0$, $\cchan \Lambda_{\lambda}$ is the set of stochastic channels $\stoc{p_i}{E_i}$ such that for any set of qubits $R$ 
\begin{equation}\label{eq:spacelocal}
\sum_{i\suchthat R\subseteq \supp{E_i}} p_i \leq \lambda^{|R|}
\end{equation}
\end{defn}

How much can time correlations disrupt the locality of noise? 
A quantitative answer is provided in appendix~\ref{app:local_composition}:
\begin{equation}\label{eq:Lambda1}
\cchan \Lambda_\lambda\compt\cchan \Lambda_{\lambda'}\subseteq \cchan \Lambda_{2\max(\lambda,\lambda')^{1/2}}.
\end{equation}
As a comparison, for uncorrelated spatially local noise
\begin{equation}\label{eq:Lambda2}
\cchan \Lambda_\lambda\circ\cchan \Lambda_{\lambda'}\subseteq \cchan \Lambda_{\lambda+\lambda'}.
\end{equation}
Although time-correlated spatially local noise behaves much worse than time-uncorrelated spatially local noise, in both cases locality is preserved when composing several time slices together.

 \subsection{Approximate behavior}

Since stochastic channels are probability distributions over channels, the distance between them can be defined as the statistical distance.
Namely, it is always possible to write two stochastic channels $\schan A$ and $\schan B$ in the form
\begin{equation}
\schan A= \stoc{p_i}{E_i}, \qquad \schan B= \stoc{p'_i}{E_i},
\end{equation}
and their distance is
\begin{equation}
\dist{\schan A}{\schan B}:=\frac 1 2\sum_i |p_i-p'_i|.
\end{equation}
In some cases this distance might be huge compared to, say, the diamond distance of the corresponding channels.
However, since the aim is to deal with noise of unknown form, there is hardly any gain in considering a more sophisticated distance that accounts for the proximity of channels.
Moreover, as in any approach to error correction that models noise stochastically, eventually the goal is to bound the probability that a logical error might occur, and  the statistical distance does the job.

\begin{defn}\label{def:appr}
A class of stochastic channels $\cchan A$ is $\epsilon$-approximated by another class $\cchan B$, denoted 
\begin{equation}
\cchan A \appr\epsilon \cchan B,
\end{equation}
when for every $\schan A\in\cchan A$ there exists $\schan B\in\cchan B$ with
\begin{equation}
\dist{\schan A}{\schan B} \leq \epsilon.
\end{equation}
\end{defn}

The following basic properties are proven in appendix~\ref{app:aprox}:
\begin{align}
\label{eq:subset_appr}
\cchan A\appr 0 \cchan B\,\,\iff\,\,&\cchan A\subseteq \cchan B,
\\
\label{eq:transit_appr}
\cchan A\appr\epsilon \cchan B\appr\delta\cchan C\,\,\implies\,\,& \cchan A\appr{\epsilon+\delta}\cchan C,
\\
\label{eq:compt_appr}
\cchan A\appr\epsilon \cchan C\,\,\wedge\,\,\cchan B\appr\delta \cchan D\,\,\implies\,\, &\cchan A\compt\cchan B\appr{\epsilon+\delta} \cchan C\compt\cchan D.
\end{align}
Notice that the wedge $\wedge$ represents a logical `and'.

\subsection{Error rate}

As in section~\ref{sec:ec}, consider a pair $C, D$ of encoding and decoding operations defined for some quantum system $H$.
The error rate of a stochastic channel $\schan E$ that operates on $H$ is defined to be
\begin{equation}
\fail{\schan E}:=\dist{D \circ\schan E\circ C}{\id_{\mathrm {logical}}}.
\end{equation}
To clarify the meaning of this figure of merit, let $\schan E=\stoc{p_i}{E_i}$.
It can be regarded as a conventional channel
\begin{equation}
E=\sum_i p_i E_i
\end{equation}
satisfying, when $p=\fail {\schan E}$,
\begin{equation}
DEC =(1-p) \id_\mathrm {logical} + p E',
\end{equation}
where $E'$ is some channel.
The error rate bounds the probability that logical information is damaged.

The error rate of a stochastic class $\cchan A$ is the worst possible error rate for an element of the class,
\begin{equation}
\fail {\cchan A}:=\sup_{\schan E\in\cchan A} \fail {\schan E}.
\end{equation}
Observe that
\begin{equation}\label{eq:fail_appr1}
D\circ\cchan A\circ C\appr\epsilon \id_\mathrm{logical}\iff\fail {\cchan A}\leq\epsilon,
\end{equation}
which using~\eqref{eq:transit_appr} gives
\begin{equation}\label{eq:fail_appr}
D\circ\cchan A\circ C\appr\epsilon D\circ \cchan B\circ C\implies\fail {\cchan A}\leq\epsilon+\fail {\cchan B}.
\end{equation}

\subsection{Modeling quantum memories}\label{sec:model}

The quantum memory model studied here consists of two processes that alternate in time: noise accumulation and single-shot error recovery.
Each process is described by a parametrized stochastic class: $\cchan L_\lambda$ represents the accumulation of noise in between error-correction steps, and $\cchan R_\eta$ represents the noisy error recovery steps.
The stochastic classes $\cchan L_\lambda$ and $\cchan R_\eta$ will be described in section~\ref{sec:memories}. 
For the time being it suffices to know that the parameters $\lambda$ and $\eta$ have a similar meaning: they indicate the amount of (spatially local) noise afflicting each process, with the value zero indicating a noiseless process.

The figure of merit is the error rate
\begin{equation}\label{eq:goal}
\fail{(\cchan R_\eta\compt\cchan L_\lambda)^{\compt n}}
\end{equation}
where $n$ is the number of iterations of the error accumulation and recovery steps.
The goal is to show that there exist thresholds $\lambda_0, \eta_0>0$ such that for any 
\begin{equation}\label{eq:threshold}
\lambda<\lambda_0, \qquad \eta<\eta_0,
\end{equation}
the error rate \eqref{eq:goal} can be efficiently made arbitrarily small for any value of $n$, so that the quantum memory lifetime can be as long as desired.

\subsection{Approach}\label{sec:approach}

In general, the strategy to bound the error rate \eqref{eq:goal} will involve an additional parametrized stochastic class $\cchan N_\tau$.
Its purpose is to characterize the residual noise after error recovery, with $\tau\geq 0$ the amount of noise.
In~\cite{bombin:2015:single-shot} the parameter $\tau$ is referred to as a `temperature': error recovery can be regarded as cooling down the system by removing the entropy introduced by errors.
Assume that for some $\lambda$, $\eta$, $\tau_i$ and $\delta_i$, 
\begin{align}
\cchan R_\eta\compt\cchan N_{\tau_2} \circ C &\appr{\delta_1} \cchan N_{\tau_1}\circ C,\label{eq:goal_RNN}\\
\cchan L_{\lambda}\compt \cchan N_{\tau_1} \circ C& \appr{\delta_2} \cchan N_{\tau_2}\circ C,\label{eq:goal_LNN}\\
\cchan L_{\lambda} \circ C& \appr{\delta_2} \cchan N_{\tau_2}\circ C,\label{eq:goal_LN}\\
\fail{\cchan N_{\tau_1}}&\leq\delta_3.\label{eq:goal_N}
\end{align}

The following series of relations are constructed by repeatedly applying~(\ref{eq:goal_RNN}-\ref{eq:goal_LN}) via~\eqref{eq:compt_appr}:
\begin{align}\label{eq:display_notation}
&(\cchan R_\eta\compt\cchan L_\lambda)^{\compt n}\circ C 
= 
(\cchan R_\eta\compt\cchan L_\lambda)^{\compt (n-1)}\compt \cchan R_\eta \compt \cchan L_{\lambda}\circ C 
\nonumber\\
&\quad\appr{\delta_2} 
(\cchan R_\eta\compt\cchan L_\lambda)^{\compt (n-1)}\compt \cchan R_\eta \compt \cchan N_{\tau_2}\circ C
\nonumber\\
&\quad\quad\appr{\delta_1} 
(\cchan R_\eta\compt\cchan L_\lambda)^{\compt (n-1)}\compt \cchan N_{\tau_1}\circ C
\nonumber\\
&\quad\quad\quad=
(\cchan R_\eta\compt\cchan L_\lambda)^{\compt (n-2)}\compt \cchan R_\eta \compt \cchan L_{\lambda}\compt \cchan N_{\tau_1}\circ C 
\nonumber\\
&\quad\quad\quad\quad\appr{\delta_2} 
(\cchan R_\eta\compt\cchan L_\lambda)^{\compt (n-2)}\compt \cchan R_\eta \compt \cchan N_{\tau_2}\circ C
\nonumber\\
&\quad\quad\quad\quad\quad
\dots
\nonumber\\
&\quad\quad\quad\quad\quad\quad\appr{\delta_1} 
\cchan N_{\tau_1}\circ C
\end{align}
According to~\eqref{eq:transit_appr} these relations imply
\begin{equation}\label{eq:ntimes}
(\cchan R_\eta\compt\cchan L_\lambda)^{\compt n}\circ C\appr{n(\delta_1+\delta_2)} \cchan N_{\tau_1}\circ C,
\end{equation}
and thus, using~(\ref{eq:fail_appr}, \ref{eq:goal_N})
\begin{equation}\label{eq:goal2}
\fail{(\cchan R_\eta\compt\cchan L_\lambda)^{\compt n}}\leq n(\delta_1+\delta_2)+\delta_3.
\end{equation}
The problem is to find conditions under which~(\ref{eq:goal_RNN}-\ref{eq:goal_N}) hold and $\delta_1$, $\delta_2$ and $\delta_3$ can be made arbitrarily small (in a resource-efficient manner).
Such conditions will be described in section~\ref{sec:single-shot}.

\section{Stabilizer formalism}

A complete account of stabilizer codes in the context of single-shot error correction can be found in~\cite{bombin:2015:single-shot}.
The purpose of this section is to explain some basic aspects that will be required in the main text.
 
\subsection{Stabilizer codes}\label{sec:stabilizer}

A stabilizer code is defined on a given number of physical qubits~\cite{poulin:2005:stabilizer}.
It can be characterized with two sets of Pauli operators:  `check operators' (or stabilizers) and `gauge operators'.
They fix the code subspace and its subsystem structure, discussed in section~\ref{sec:ec}, as follows.

The check operators all commute, and the Hilbert space is the sum of subspaces $H_i$ of the same dimension
\begin{equation}\label{eq:Hi}
H=\bigoplus_\sigma H_\sigma,
\end{equation}
each corresponding to a different set of eigenvalues of the check operators.
Each check operator has two eigenvalues, and only one is compatible with the code subspace.
The index $\sigma$ can be identified with the `error syndrome': the set of check operators with an eigenvalue incompatible with the code.
It is useful to regard $\sigma$ also as a binary vector: each entry corresponds to a check operator and has value $1$ when the check operator belongs to $\sigma$.
With this notation $H_0$ is the code subspace, and the binary addition $\sigma+\sigma'$ of any two syndromes is also a syndrome.

The code subspace is composed of a logical and a gauge subsystem, each equivalent to a number of qubits.
The decomposition is fixed by the gauge operators, which generate an algebra $\mathfrak G$: its elements map $H_0$ to itself, act trivially on the logical subsystem, and can arbitrarily transform the gauge subsystem~\cite{poulin:2005:stabilizer}.
In fact, a similar decomposition applies to all the subspaces $H_\sigma$.

Let a Pauli operation be a quantum channel $\rho\mapsto e\rho e^\dagger$, with $e$ some Pauli operator.
Ideal error correction, \emph{i.e.} the error recovery operation in~\eqref{eq:D}, takes the form
\begin{equation}\label{eq:ideal_corr}
R = \sum_\sigma \omega_\sigma\,\chan P_\sigma,
\end{equation}
where $P_\sigma$ denotes the map $\rho\mapsto h\,\rho\, h$, with $h$ the projector onto $H_\sigma$, and $\omega_\sigma$ is a Pauli operation such that
\begin{equation}
\omega_\sigma \chan P_\sigma = \chan P_0 \omega_\sigma.
\end{equation}
Ideal error correction is a two-step process: first the check operators are measured to recover an error syndrome $\sigma$, then a Pauli operator is applied to bring the system back to the code subspace.
The choice of correction operations $\omega_\sigma$ will be considered to be part of the description of a stabilizer code.
This fixes, in particular, the decoding channel $D$, which incorporates the recovery channel~\eqref{eq:ideal_corr}.

\subsection{Processing of noisy syndromes}\label{sec:noisy_syndromes}

Fault-tolerant error correction, as opposed to the ideal error correction described above, might involve a series of measurements of check operators or, more generally, of gauge operators~\cite{poulin:2005:stabilizer}.
In either case, since each measurement has a binary outcome the measurement results can be described with a binary vector $x$. 
This binary string needs to be processed in a classical computer to produce an error syndrome $\sigma=r(x)$, and then the correction operation $\omega_\sigma$ is applied.

Given a syndrome $\sigma$, there will be a set of measurement outcomes $x$ compatible with $\sigma$ in the absence of errors, and then it is natural to impose that $r(x)=\sigma$.
This constraint, in general, is not enough to fix the function $r$, as many outcomes will not be compatible with any syndrome in this way.
There is a constraint on $r$, however, that often holds and will be relevant in the next section.
If $x$ is compatible with $\sigma$ as stated, and if $y$ is an arbitrary binary vector of the same length as $x$, then
\begin{equation}\label{eq:r}
r(x+y) = \sigma+r(y),
\end{equation}
where addition is modulo 2.
In particular, this is true for all the single-shot error-correction strategies discussed in~\cite{bombin:2015:single-shot}.

A remark is in order. 
As far as preserving the logical state is concerned, there is no need to actually perform the correction operations $\omega_\sigma$ at the end of each round of error correction: the stabilizer formalism makes it possible to simply keep track of them. 
In fact, the optimal (but computationally more expensive) strategy is always to put together the measurements from different rounds in order to infer the errors, rather than interpreting each round separately, for the simple reason that time-local strategies are instances of global strategies.

\subsection{Ignoring the gauge operators}\label{sec:channels}

It is convenient to use a notation that ignores the gauge algebra $\mathfrak G$~\cite{bombin:2015:single-shot}. 
For reasons to be clarified below, stochastic channels of interest will mainly take the form
\begin{equation}\label{eq:generalG}
\stoc {p_i}{\sum_\sigma\chan E_{i\sigma}P_\sigma G_{i\sigma}},
\end{equation}
where $E_{i\sigma}$ are Pauli operations and $G_{i\sigma}$ are quantum channels with Kraus operators in $\mathfrak G$.
Any such stochastic channel will be denoted, omitting the $G_{i\sigma}$,
\begin{equation}\label{eq:general_}
\stocg {p_i}{\sum_\sigma\chan E_{i\sigma}P_\sigma}.
\end{equation}
This does not give rise to any ambiguity regarding the compositions $\compt$ and $\circ$ because ignoring the $G_{i\sigma}$ terms is consistent with channel composition~\cite{bombin:2015:single-shot}.
The failure rate is not ambiguous either.
In fact, for these stochastic channels it does not depend on the encoding $C$ used.
For fixed $p_i$ and $E_{i\sigma}$, the set of all channels of the form~\eqref{eq:generalG} will be denoted
\begin{equation}\label{eq:general_set}
\sset{\stocg {p_i}{\sum_\sigma\chan E_{i\sigma}P_\sigma}}.
\end{equation}

There are two main kinds of stochastic channels that will be of interest below.
\begin{defn}
The stochastic class $\cchan P$ is the set of all Pauli channels, \emph{i.e.} stochastic channels
\begin{equation}\label{eq:pauli}
\stocg {p_i}{\chan E_i},
\end{equation}
where $E_i$ are Pauli operations. 
\end{defn}

\begin{defn}
The stochastic class $\cchan Q$ is the set of stochastic channels of the form
\begin{equation}\label{eq:noisy_R}
\stocg{q_\sigma}{R_\sigma},
\end{equation}
where
\begin{equation}\label{eq:noisy_R2}
R_\sigma := \sum_{\sigma'}  {\omega_{\sigma+\sigma'}} P_{\sigma'}.
\end{equation}
\end{defn}
In section~\ref{sec:noise_model} class $\cchan P$ will be used to model noise occurring in between recovery steps, and class $\cchan Q$ will be used to model noisy error recovery.

\section{Quantum memories}\label{sec:memories}

This section describes the main result, the compatibility of single-shot error correction and spatially local stochastic noise.

\subsection{Noise model}\label{sec:noise_model}

As stated in section~\ref{sec:approach}, the quantum memory model is described by a pair of parametrized stochastic classes: $\cchan L_\lambda$ represents the accumulation of noise in between error-correction steps, and $\cchan R_\eta$ represents the noisy error recovery steps.
This section provides their specific form, which aims to be as simple as possible.

\begin{defn}\label{def:paulilocal}
Given $\lambda\geq 0$, $\cchan L_{\lambda}$ is the set of stochastic channels that can be put in the form
\begin{equation}
\stocg{p_i}{E_i}\in\cchan P,
\end{equation}
so that
\begin{equation}
\stoc{p_i}{E_i}\in\cchan \Lambda_\lambda
\end{equation}
\end{defn}
The noisy steps are modeled as local Pauli noise, a standard phenomenological model that allows easy computations within the context of stabilizer codes.
Because of linearity, see section~\ref{sec:oqec}, there is no loss of generality in considering local Pauli noise rather than general local noise, namely
\begin{equation}
\fail{\cchan A\compt \cchan \Lambda_\lambda\compt\cchan B}\leq\fail{\cchan A\compt \cchan L_\lambda\compt\cchan B},
\end{equation}
where the inequality accounts for the extra gauge operations allowed on the right-hand side\footnote
{Similarly, the environment can be ignored without loss of generality.
}.
It is assumed that the waiting times in between error-correction rounds are independent of the code size, so that the parameter $\lambda$ is fixed for the whole family of codes.

In order to model noisy measurements it is again enough to consider Pauli noise.
Single-shot error correction with stabilizer codes involves (i) a finite-depth circuit to perform measurements (generally of gauge operators), which yield a binary vector $x$, (ii) classical processing to obtain the error syndrome $\sigma=f(x)$, and (iii) application of the Pauli operation $\omega_\sigma$.
In general noise will affect all aspects of the process, but from a qualitative point of view it suffices to consider one kind of fault alone: classical errors afflicting the outcome of the measurements.
Indeed, local Pauli noise afflicting the physical qubits can be propagated forward without changing its local nature, and thus can be accounted for in the subsequent `layer' of local noise.

Notice that the classical measurement outcomes can be treated as qubits for which phase-flip errors are immaterial, so that they might only undergo bit-flip errors:
instead of the correct outcome $x$, the measurement yields a result $x+y$ for some bit string $y$.
Assuming that condition~\eqref{eq:r} holds, the error-correction step is represented by stochastic channels
\begin{equation}\label{eq:noisy_Ry}
\stocg{p_y}{R_{r(y)} }\in\cchan Q,
\end{equation}
where $p_y$ is the probability that the bit-flips $y$ happen.
The measurement of gauge generators is not explicit in~\eqref{eq:noisy_Ry} due to the notation introduced in~\ref{sec:channels}.
In an abuse of notation, identify any binary vector with the set of positions at which it has value $1$.
To model the locality of noise, the probabilities $p_y$ are subject to a condition with parameter $\eta$ analogous to~\eqref{eq:spacelocal}: 
\begin{defn}
Given $\eta\geq 0$, $\cchan R_{\eta}$ is the set of stochastic channels 
\begin{equation}
\stocg{p_y}{R_{r(y)}}\in \cchan Q
\end{equation}
such that for any binary vector $x$ representing a bit-flip configuration,
\begin{equation}\label{eq:local_bit-flip}
\sum_{y\supseteq x} p_y \leq \eta^{|x|}.
\end{equation}
\end{defn}

\subsection{Residual noise}\label{sec:residual_noise}

This section adapts the results on the residual noise given in~\cite{bombin:2015:single-shot} to the notation and tools used here, particularly the correlated composition and the distance relations.
It reviews both the conditions imposed to the residual noise
\begin{equation}
\cchan N_\tau\subseteq \cchan P,
\end{equation}
and specific examples satisfying them.

For a given family of codes of increasing size, the conditions on the residual noise are as follows:
(i) there are real functions $g_1(x)$ and $g_2(x,y)$ with limit $0$ at the origin and monotonically increasing (in the case of $g_2$, on either of its parameters when the other is fixed), and (ii) for each code there are real functions $f_1(x)$, $f_2(x,y)$ and $f_3(x)$, and for each $i=1, 2, 3$ there is a neighborhood of the origin within which $f_i$ goes to zero in the limit of large codes, and (iii) the following relations hold:
\begin{align}
\cchan L_{\lambda} & \appr{f_1(\lambda)} \cchan N_{g_1(\lambda)},\label{eq:condition_LN}\\
\cchan N_{\tau_1}\compt \cchan N_{\tau_2}& \appr{f_2(\tau_1,\tau_2)} \cchan N_{g_2(\tau_1,\tau_2)},\label{eq:condition_NNN}\\
\fail{\cchan N_{\tau}}&\leq f_3(\tau).\label{eq:condition_N}
\end{align}
Moreover, noise monotonically increases with $\tau$,
\begin{equation}\label{eq:condition_mono}
0\leq\tau\leq\tau'\implies \cchan N_\tau\subseteq \cchan N_{\tau'},
\end{equation}
and disappears for $\tau=0$,
\begin{equation}\label{eq:condition_zero}
\cchan N_0=\sset{\stocg{1}{\id_H}}.
\end{equation}

Take, for example, the relation \eqref{eq:condition_LN}.
Not only is it required that the residual noise $\cchan N_\tau$ can approximate local noise $\cchan L_\lambda$ with arbitrary precision in the limit of large codes (for $\lambda$ below a threshold).
In addition, any `temperature' $\tau>0$, no matter how low, can be achieved for some $\lambda>0$.

The above conditions do not address the issue of resources, \emph{i.e.} how fast the error bounds $f_i$ go to zero as a function of the number of qubits in the code.
In the case of the fault-tolerant schemes discussed below, a precision $f_i<\epsilon$ requires a polylogarithmic number of qubits in $1/\epsilon$, which is efficient.

\subsubsection{Local noise}\label{sec:local_noise}

For some codes it will suffice to use local noise as the model for residual noise.
In this case
\begin{equation}\label{eq:NL}
\cchan N_\tau= \cchan L_\tau,
\end{equation}
and thus a valid choice is, see appendix~\ref{app:local_composition},
\begin{alignat}{2}
f_1(\lambda)&=0, \qquad &g_1(\lambda)&=\lambda,\\
f_2(\tau_1,\tau_2)&=0, \qquad &g_2(\tau_1,\tau_2)&=2 \max(\tau_1,\tau_2)^{1/2}.
\end{alignat}
The only nontrivial condition comes from~\eqref{eq:condition_N}: the family of codes has to have a threshold for local noise.
Notice that in~\cite{bombin:2015:single-shot} local residual noise was handled in a slightly different manner, see appendix~\ref{app:effective}.

\subsubsection{Local syndromes}\label{sec:local_syndromes}

For some codes it is not residual noise that has the local structure~\eqref{eq:spacelocal}, but rather the residual distribution of error syndromes.
In such cases the residual noise model of interest is
\begin{equation}\label{eq:Nexc}
\cchan N_\tau = \cchan N_\tau^\mathrm{exc}:= \set{\stocg{q_\sigma}{\omega_\sigma}}{\forall\sigma,\,\sum_{\sigma'\supseteq \sigma} q_{\sigma'} \leq \tau^{|\sigma|}}.
\end{equation}
Recall that $\sigma$ represents a subset of check operators, and thus the definition strongly depends on the choice of check operators.
Since the elements of $\cchan N_\tau$ are composed of correction operations, in this case
\begin{alignat}{1}
f_3(\tau)&=0.
\end{alignat}

Appendix~\ref{app:local_syndromes} discusses conditions under which all the required properties are met.
As in~\cite{bombin:2015:single-shot}, such conditions are satisfied by topological stabilizer codes that are related to self-correcting Hamiltonian systems, in the following sense.
For any local stabilizer code one can always write down a local quantum Hamiltonian as a sum of energy penalties for check operators in the error syndrome.
The ground state of such a system is the subspace of encoded states, and the other eigenstates can be chosen to have well-defined syndromes.
The energy of such an eigenstate is the sum of the energy penalties for each of the check operators.
For some codes a confinement mechanism gives rise to self-correction~\cite{dennis:2002:tqm,alicki:2010:thermal}: in the thermodynamic limit, in thermal equilibrium and below a threshold temperature, the system becomes a perfect quantum memory without the need of any external intervention.
Definition~\eqref{eq:Nexc} is in fact motivated by thermal equilibrium, were states have a probability that decreases exponentially with the energy.

\subsection{Effective noise}\label{sec:single-shot}

This section gives an account of the last ingredients required to prove that single-shot error-correction strategies give rise to quantum memories of arbitrary lifetimes.

In order to obtain relations of the form \eqref{eq:goal_RNN} it is convenient to introduce the following function~\cite{bombin:2015:single-shot}.
It maps stochastic classes $\cchan R\subseteq \cchan Q$ to `effective' sets of Pauli channels $\mathrm{eff}(\cchan R)\subseteq\cchan P$.
\begin{defn}\label{def:eff}
For any $\cchan R\subseteq \cchan Q$, the stochastic class $\mathrm{eff}(\cchan R)$ contains the stochastic channels
\begin{equation}
\stocg{q_\sigma}{\omega_\sigma}
\end{equation}
such that $\cchan R$ contains an element of the form
\begin{equation}
\stocg{q_\sigma}{R_\sigma}.
\end{equation}
\end{defn}
Its usefulness stems from the following lemma.
\begin{lem}\label{lem:central}
For any stabilizer code, $\cchan R\subseteq \cchan Q$ and $\cchan E\subseteq\cchan P$
\begin{equation}\label{eq:central}
\cchan R\compt \cchan E\circ P_0 \appr{\epsilon} \mathrm{eff}(\cchan R)\circ P_0,
\end{equation}
where
\begin{equation}
\epsilon=\fail { \mathrm{eff}(\cchan R)\compt\cchan E}.
\end{equation}
\end{lem}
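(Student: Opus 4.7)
The plan is to verify $\cchan R\compt\cchan E\circ P_0 \appr{\epsilon} \mathrm{eff}(\cchan R)\circ P_0$ elementwise via Definition~\ref{def:appr}. Pick an arbitrary $\schan F\in\cchan R\compt\cchan E\circ P_0$; by the definition of correlated composition it has the form $\stocg{r_{\sigma,i}}{R_\sigma E_i P_0}$ for some joint distribution $r_{\sigma,i}$ whose $\sigma$-marginal selects an element $\schan R:=\stocg{q_\sigma}{R_\sigma}\in\cchan R$ (with $q_\sigma:=r_{\sigma,(i)}$) and whose $i$-marginal selects an element $\schan E\in\cchan E$. The natural approximant will be $\schan G:=\stocg{q_\sigma}{\omega_\sigma P_0}=\mathrm{eff}(\schan R)\circ P_0\in\mathrm{eff}(\cchan R)\circ P_0$, and the goal is to show $\dist{\schan F}{\schan G}\le\epsilon$.

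First I will collapse $R_\sigma E_i P_0$. Since $E_i$ is a Pauli with some syndrome $\sigma_i$, the identity $E_i P_0 = P_{\sigma_i}E_i$ together with the mutual orthogonality of the syndrome projectors in $R_\sigma=\sum_{\sigma'}\omega_{\sigma+\sigma'}P_{\sigma'}$ isolates the $\sigma'=\sigma_i$ term, so $R_\sigma E_i P_0=\omega_{\sigma+\sigma_i}E_i P_0$. Both $\omega_{\sigma+\sigma_i}E_i$ and $\omega_\sigma$ are Paulis with syndrome $\sigma$, so their product has trivial syndrome and, modulo the gauge algebra $\mathfrak G$, represents a logical Pauli $\ell_{\sigma,i}$; in the gauge-ignoring notation of section~\ref{sec:channels} this reads $R_\sigma E_i P_0\equiv\ell_{\sigma,i}\,\omega_\sigma P_0$. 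Splitting each $\sigma$-event of $\schan G$ into $(\sigma,i)$-events with probability $r_{\sigma,i}$, and coupling the two stochastic channels through the common index, the channels disagree exactly on events with $\ell_{\sigma,i}\ne\id$, so
\begin{equation}
\dist{\schan F}{\schan G}\le\sum_{(\sigma,i):\,\ell_{\sigma,i}\ne\id} r_{\sigma,i}.
\end{equation}

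The closing step is to identify this sum as $\fail{\mathrm{eff}(\schan R)\compt\schan E}$, where $\mathrm{eff}(\schan R)\compt\schan E=\stocg{r_{\sigma,i}}{\omega_\sigma E_i}$ is the corresponding element of $\mathrm{eff}(\cchan R)\compt\cchan E$ built from the same joint distribution. Applying $D\circ\,\cdot\,\circ C$ term by term: the operator $\omega_\sigma E_i$ has syndrome $\sigma+\sigma_i$, so the ideal recovery inside $D$ multiplies by $\omega_{\sigma+\sigma_i}$; the composite $\omega_{\sigma+\sigma_i}\omega_\sigma E_i$ equals $\omega_\sigma(\omega_{\sigma+\sigma_i}E_i)$ up to a sign irrelevant at the channel level, and therefore agrees modulo gauge with the same $\ell_{\sigma,i}$ as before. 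Tracing out the gauge yields the logical channel $[\ell_{\sigma,i}\cdot\ell_{\sigma,i}^\dagger]$, whose statistical distance from $\id_\mathrm{logical}$, weighted by $r_{\sigma,i}$, is precisely $\sum_{\ell_{\sigma,i}\ne\id} r_{\sigma,i}$. Combining the two estimates gives $\dist{\schan F}{\schan G}\le\fail{\mathrm{eff}(\schan R)\compt\schan E}\le\fail{\mathrm{eff}(\cchan R)\compt\cchan E}=\epsilon$, as required. The main obstacle I anticipate is the careful bookkeeping that makes the same logical Pauli $\ell_{\sigma,i}$ govern both the residual channel after noisy recovery and the decoded output after ideal recovery; the gauge-ignoring notation~\eqref{eq:general_} is indispensable for making this identification transparent.
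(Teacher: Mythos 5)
Your proposal is correct and follows essentially the same route as the paper: the key step in both is collapsing $R_\sigma E_i P_0$ to $\omega_{\sigma+\sigma_i}E_iP_0=\synd{\omega_\sigma E_i}E_iP_0$ and identifying the events where this differs from $\omega_\sigma P_0$ (modulo gauge) with precisely the failure events of $\omega_\sigma E_i$. The only difference is presentational — you inline the content of Propositions~\ref{prop:appr_fail} and~\ref{prop:fail} via an explicit coupling of the joint distribution $r_{\sigma i}$, whereas the paper factors the argument through the auxiliary class $\cchan F_{\schan T}$ and the syndrome-class construction $\synd{\cdot}$.
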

The proof is in appendix~\ref{app:proof}. 
The same result holds if the  correlated composition $\compt$ is substituted with the uncorrelated composition $\circ$.

Lemma~\ref{lem:central} provides the key condition for the quantum memory to have an arbitrarily long lifetime. 
Assume that for a given family of codes the residual noise $\cchan N_\tau$ satisfies all the conditions  of section~\ref{sec:residual_noise}, and that, in addition, the following is true:
(i) there is a real function $g_4(x)$ with limit $0$ at the origin, that is monotonically increasing; (ii) for each code there is a real function $f_4(x)$, and there is a neighborhood of the origin within which $f_4$ goes to zero in the limit of large codes; (iii) the following relation holds:
\begin{equation}
\mathrm{eff}(\cchan R_{\eta}) \appr{f_4(\eta)} \cchan N_{g_4(\eta)}.\label{eq:condition_RN}
\end{equation}
It is not difficult to check, see appendix~\ref{app:together}, that the relations~(\ref{eq:goal_RNN}-\ref{eq:goal_N}) hold setting
\begin{align}
\tau_1&=g_4(\eta),\nonumber\\
\tau_2&=g_2(g_1(\lambda),\tau_1),\nonumber\\
\delta_1&=2f_4(\eta)+f_2(\tau_1,\tau_2)+f_3(g_2(\tau_1,\tau_2)),\nonumber\\
\delta_2&=f_1(\lambda)+f_2(g_1(\lambda),\tau_1),\nonumber\\
\delta_3&=f_3(\tau_1).\label{eq:choice}
\end{align}
Moreover, by taking $\lambda$ and $\eta$ small enough the $\delta_i$ go to zero in the limit of large codes, so that the memory can preserve quantum information for arbitrarily long times according to the inequality~\eqref{eq:fail_appr}.

A single-shot error recovery strategy for a stabilizer code is composed of a collection of gauge operators to be measured (that can be measured with a finite-depth circuit), a syndrome recovery function $r$ and a choice of Pauli correction operations $\omega_\sigma$, all such that the corresponding parametrized stochastic class $\cchan R_\eta$ fulfills the above conditions.
Notice, in particular, that the residual noise after every fault-tolerant recovery step can be made arbitrarily small by reducing the noise in the recovery, a characteristic trait of single-shot error correction.

Two different kinds of single-shot error-correction strategies were introduced in~\cite{bombin:2015:single-shot}.
The first strategy is based on 3D gauge color codes, which are a class of 3D topological codes with unique characteristics~\cite{bombin:2015:gauge}.
Thanks to the gauge structure, for gauge color codes the residual noise is local, as in~\eqref{eq:NL}.
The second strategy is based on stabilizer codes exhibiting self-correcting properties, as discussed in section~\ref{sec:local_syndromes}.
For such codes the residual noise has the local syndrome structure $\cchan N^\mathrm{exc}_\tau$ of~\eqref{eq:Nexc}.
For more details, see appendix~\ref{app:effective}.

\section{Discussion}

The purpose of this work is to show that fault-tolerant quantum computation is still possible in the presence of noise with arbitrary time correlations.
In particular, quantum memories based on single-shot error-correcting codes exhibit a threshold for spatially local stochastic noise.
A natural next step is to obtain a full-fledged threshold theorem for universal computation.
Since quantum computation with 3D gauge color codes amounts to a series of transversal operations and single-shot error correction~\cite{bombin:2015:single-shot,bombin:2016:dimensional}, there is no reason to expect obstacles in this regard.

The role of spatial dimensionality in quantum error correction is intriguing.
It is known that fault-tolerant quantum computation is possible with local gates even for a single spatial dimension~\cite{aharonov:1997:ftqc}.
This suggests that spatial dimension plays no role, at least qualitatively.
However, it is not known if single-shot error correction can be performed with less than three spatial dimensions.
It might be the case that for two spatial dimensions and arbitrary time correlations, fault-tolerant quantum computation is not possible.
If this were true, it would set an interesting example of how spatial dimensionality can play a fundamental role in fault-tolerance.

One of the defining features of single-shot error correction is that it only requires a finite-depth quantum circuit, assisted with non-local classical computation.
In particular, no classical information needs to flow between different layers of single-shot error correction.
A stronger form of locality is achieved by removing the ability to process classical information nonlocally, thus reducing quantum error correction to purely local circuitry.
This seems to be incompatible with single-shot error correction: a single step of fully local error correction is unlikely to be able to produce arbitrarily low residual noise (in the limit where error correction is close to noiseless).
However, fault tolerance can still be achieved~\cite{ahn:2004:extending}: it suffices for residual noise to stay within certain bounds.
A natural question is whether such entirely local forms of error correction can deal with time-correlated noise (or at least with some particular forms of it).

Finally, it is interesting to ask, more generally, if there are other approaches to achieve resilience to time-correlated errors.

\begin{acknowledgments} 

I am grateful to Benjamin J. Brown and Michael J. Kastoryano for useful discussions, and to Aleksander Kubica for comments on an early version of the manuscript. 
I received support from the MINECO grant FIS2012-33152 and the CAM grant QUITEMAD+. 
This work was supported by the International Research Unit of Advanced Future Studies at Kyoto University.

\end{acknowledgments}

\appendix

\section{Difficulties with 2D codes}\label{app:difficulties}

This appendix is addressed to readers familiar with topological stablilizer codes.
It aims to briefly discuss why noise with arbitrary time correlations cannot be handled by 2D topological stabilizer codes.
These include the toric code~\cite{kitaev:1997:quantum} and 2D color codes~\cite{bombin:2006:2DCC}.
The important feature of these codes is that they have string-like logical operators for which \emph{a localized set of faults can hide the syndromes} at a given endpoint of a string-like error operator. 

The outline of the argument is as follows.
For any string-like logical operator, clearly there exists a history of faults that (i) implements the logical operator, producing a logical error, (ii) leaves no trace in the syndrome history, and (iii) only requires a finite number of faults per time step.
Such a fault-path does not satisfy the spatial locality constraint, but it suffices to consider an ensemble of a finite number of such fault-paths, each starting at a different time.
This limits the memory lifetime to be proportional to the length of the string, but in fact the argument can be improved by dropping condition (iii) above.
Instead, with a bit of care one can allow for a finite density of faults per time step, getting a finite memory lifetime.

This argument does not work for 3D gauge color codes because it is not possible to hide a topological charge with measurement faults localized in the vicinity of the charge.
Instead, the faults should be string-like and connect the charge to another one or to a boundary.
This is the confinement mechanism~\cite{bombin:2015:single-shot}.

Finally, it is worth noting that the above argument can be slightly modified so that the stochastic process governing the noise takes the form of a Markov chain.
As stated, certain histories of faults unavoidably damage the memory and only require a few faults per time step.
In the simplest case, the history amounts to moving a particle along a predetermined trajectory from one end of the system to another.
To satisfy the Markov property, it suffices to start with some probability such a history at every time step, that is, to randomly create particles at the starting end.
Indeed, (i) since the start of the history is random, it does not depend on what happened in the past, and (ii) once a history has started, every step  of it depends on what happened in the previous step (\emph{i.e.}, it depends on where the particle is).
Moreover, the trajectories of particles starting at different times do not interfere with each other.
Finally, to satisfy the spatial locality constraint it suffices to make the probability to start a history as small as necessary.

\section{Extrinsic single-shot error correction}\label{app:extrinsic}

As defined in~\cite{bombin:2015:single-shot}, an approach to noisy error correction can be considered to be single-shot if it only involves a quantum-local operation: a finite depth quantum circuit involving ancilla qubits and aided with global classical computation (implementing feedback on measurement results).
However, the single-shot error-correction techniques discussed in~\cite{bombin:2015:single-shot} follow a much more strict set of rules: the circuits are local with respect to geometry of the topological codes used, and the density of ancilla qubits (within that geometry) is bounded by a constant.
Such single-shot error-correction techniques can be regarded as intrinsic, as they strongly depend on special properties of the codes.
It might be worth considering a wider class of extrinsic techniques that do not rely so strongly on specific codes, but still fit within the wider definition.

In Knill's approach to error correction \cite{knill:2005:scalable} the measurement of check operators is performed via teleportation.
The aim of Knill's technique is to reduce the exposure to noise of those physical qubits where quantum information is encoded.
This requires the preparation of high quality states on the ancillas used for teleportation.
But the teleportation of check operator measurements could also be used differently.
For a local family of codes~\cite{kitaev:1997:quantum}, where check operators can be measured with a finite depth circuit, any number $n$ of noisy rounds of check operator measurements can be performed in constant time by performing $n$ teleportations in parallel.
This approach fits the wider definition of single-shot error correction, at the cost of using a large number of ancilla qubits: In the case of topological codes, it amounts to add an extra spatial dimension to the qubit lattice.
Notice that in the case of surface codes a similar alternative already exists in the form of 3D cluster state quantum computation~\cite{raussendorf:2007:deformation}.

\section{Stochastic channels}\label{app:stochastic_channels}

This appendix provides the results required in section~\ref{sec:stochastic_noise}.

\subsection{Associativity}\label{app:associativity}

\begin{defn}
Given classes $\cchan A$ and $\cchan B$, their correlated composition $\cchan A\compt\cchan B$ is the set of stochastic channels $\schan E$ of the form
\begin{equation}\label{eq:compt}
\begin{gathered}
\schan E= \stoc{p_{ij}}{{{\chan A_{i}}{\chan B_{j}}}},
\\
\stoc {p_{i(j)} }{A_{i}}\in\cchan A,\qquad
\stoc {p_{(i)j} }{B_{j}}\in\cchan B.
\end{gathered}
\end{equation}
\end{defn}

For short, when writing $\stoc{p_{ij}}{\chan A_i\circ\chan B_j}\in\cchan A\compt \cchan B$ it is understood that $p_{ij}$, $A_i$ and $B_j$ satisfy \eqref{eq:compt} (notice the explicit composition symbol $\circ$).

The following result shows that $\compt$ has the intended meaning.

\begin{prop}\label{prop:assoc}
The binary operation $\compt$ is associative.
In particular, 
\begin{equation}
\schan E\in\cchan A^{(1)}\compt\cdots\compt\cchan A^{(n)}
\end{equation}
if and only if $\schan E$ takes the form
\begin{equation}\label{eq:assoc}
\begin{gathered}
\schan E= \stoc{p_{i_1\cdots i_n}}{{{\chan A^{(1)}_{i_1}}\cdots{\chan A^{(n)}_{i_n}}}},\\
\stoc {p_{(i_1\cdots i_{k-1})i_k(i_{k+1}\cdots i_n)} }{A^{(k)}_{i_k}}\in\cchan A^{(k)},\quad k=1,\dots,n.
\end{gathered}
\end{equation}
\end{prop}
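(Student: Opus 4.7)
The plan has two parts. First I establish binary associativity $(\cchan A\compt\cchan B)\compt\cchan C=\cchan A\compt(\cchan B\compt\cchan C)$ directly, by unfolding definition~\eqref{eq:compt} twice on each side and observing that the resulting conditions coincide: both sides yield exactly the set of $\schan E=\stoc{p_{ijk}}{\chan A_i \chan B_j \chan C_k}$ whose three single-factor marginals
\begin{equation*}
\stoc{p_{i(jk)}}{\chan A_i},\quad \stoc{p_{(i)j(k)}}{\chan B_j},\quad \stoc{p_{(ij)k}}{\chan C_k}
\end{equation*}
lie in $\cchan A$, $\cchan B$, $\cchan C$ respectively. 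Second, I prove the $n$-fold characterization~\eqref{eq:assoc} by induction on $n$, the case $n=2$ being the definition of $\compt$.

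For the inductive step, I consider the left-associated product $(\cchan A^{(1)}\compt\cdots\compt\cchan A^{(n-1)})\compt\cchan A^{(n)}$. By the binary definition, its elements have the form $\stoc{q_{mk}}{\chan F_m\chan A^{(n)}_k}$ with $\stoc{q_{m(k)}}{\chan F_m}\in\cchan A^{(1)}\compt\cdots\compt\cchan A^{(n-1)}$ and $\stoc{q_{(m)k}}{\chan A^{(n)}_k}\in\cchan A^{(n)}$. Applying the induction hypothesis to the first marginal yields a representation $\stoc{r_{i_1\cdots i_{n-1}}}{\chan A^{(1)}_{i_1}\cdots\chan A^{(n-1)}_{i_{n-1}}}$ satisfying all the marginal conditions for indices $1,\dots,n-1$. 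It then suffices to merge $q$ and $r$ into a single joint distribution $p_{i_1\cdots i_n}$ fulfilling~\eqref{eq:assoc}; the reverse inclusion is a straightforward grouping of the first $n-1$ indices into a multi-index $m=(i_1,\dots,i_{n-1})$.

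The main obstacle is the merging step. Because stochastic channels are identified up to relabeling (formal sums in which equal channels collapse), the equality $\stoc{q_{m(k)}}{\chan F_m}=\stoc{r_{i_1\cdots i_{n-1}}}{\chan A^{(1)}_{i_1}\cdots\chan A^{(n-1)}_{i_{n-1}}}$ does not in general induce a canonical correspondence between $m$ and $(i_1,\dots,i_{n-1})$. I circumvent this by a coupling: for each $m$ and each $k$, distribute the weight $q_{mk}$ across tuples with $\chan A^{(1)}_{i_1}\cdots\chan A^{(n-1)}_{i_{n-1}}=\chan F_m$ proportionally to $r_{i_1\cdots i_{n-1}}$ restricted to that support. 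The resulting $p_{i_1\cdots i_n}$ reproduces $\schan E$ by construction. The marginal condition for $k=n$ is inherited directly from $\stoc{q_{(m)k}}{\chan A^{(n)}_k}\in\cchan A^{(n)}$, and for $k<n$ the marginals of $p$ over all indices except $i_k$ reduce, after summing over the refining indices, to the corresponding marginals of $r$, which lie in $\cchan A^{(k)}$ by the induction hypothesis. Once~\eqref{eq:assoc} is established, associativity for arbitrary bracketings is a corollary, since the characterization is manifestly bracketing-independent.
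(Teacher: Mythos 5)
Your proposal is correct and takes essentially the same route as the paper: induction on the left-associated product, with the crux being the coupling that redistributes each weight $q_{mk}$ over the tuples $(i_1,\dots,i_{n-1})$ whose composed channel equals $\chan F_m$ proportionally to $r_{i_1\cdots i_{n-1}}$ — precisely the construction $p_{i_1\cdots i_j m}=\sum_l (q_{lm}/q_{l(m')})\,\eta_{l,i_1\cdots i_j}\,r_{i_1\cdots i_j}$ of appendix~\ref{app:associativity}, including the correct identification of the non-canonical index correspondence as the real obstacle. The only cosmetic difference is that you obtain bracketing-independence from the $n$-fold characterization rather than separately checking the right-associated $n=3$ case, which amounts to the same verification.
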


\begin{proof}
First we show by induction that
\begin{equation}\label{eq:preassoc}
\schan E\in ((\cdots(\cchan A^{(1)}\compt \cchan A^{(2)}) \compt\cdots)\compt \cchan A^{(n-1)})\compt\cchan A^{(n)}
\end{equation}
if and only if $\schan E$ takes the form \eqref{eq:assoc}.
The case $n=2$ is true by definition.
Given $j\geq 2$, assume that the statement holds for $n=j$ and let us show that it holds for $n=j+1$.
The if direction is trivial, and we omit it.
Given $\schan E$ as in \eqref{eq:preassoc}, by definition and by the inductive assumption it takes the form
\begin{equation}\label{}
\begin{gathered}
\schan E= \stoc{q_{lm}}{{{\chan B_{l}}{\chan A^{(j+1)}_{m}}}},
\\
\stoc {q_{(l)m} }{A^{(j+1)}_{m}}\in\cchan A^{(j+1)},
\\
\stoc {q_{l(m)} }{B_{l}}=
\stoc {r_{i_1\cdots i_{j}} }{{\chan A^{(1)}_{i_1}}\cdots{\chan A^{(j)}_{i_j}}},
\\
\stoc {r_{(i_1\cdots i_{k-1})i_k(i_{k+1}\cdots i_j)} }{A^{(k)}_{i_k}}\in\cchan A^{(k)},\quad k=1,\dots,j,
\end{gathered}
\end{equation}
where the $B_l$ are all different. 
The second equation implies that
\begin{equation}\label{}
\begin{gathered}
q_{l(m)} = \sum_{i_1\cdots i_j} \eta_{l,i_1\cdots i_j}\, r_{i_1\cdots i_j},
\\
\eta_{l,i_1\cdots i_j}:=
\begin{cases}
1,&\text{if ${{\chan A^{(1)}_{i_1}}\cdots{\chan A^{(j)}_{i_j}}}=B_l$}
\\
0,&\text{otherwise.}
\end{cases}
\end{gathered} 
\end{equation}
Taking
\begin{equation}
p_{i_1\dots i_{j}m} := \sum_l \frac{q_{lm}}{q_{l(m')}} \eta_{l,i_1\cdots i_j}\, r_{i_1\cdots i_j}
\end{equation}
gives, as desired, \eqref{eq:assoc}, noting that
\begin{align}
p_{i_1\dots i_{j}(m)} &= r_{i_1\dots i_j},
\nonumber\\
 p_{(i_1\dots i_{j})m} &=q_{(l)m},
\nonumber\\ 
\sum_{i_1\dots i_j} \eta_{l,i_1\dots i_j}\, p_{i_1\dots i_{j}m} &=q_{lm}.
\end{align}

To complete the proof it suffices to show that 
\begin{equation}
\schan E\in \cchan A^{(1)}\compt (\cchan A^{(2)} \compt\cchan A^{(3)})
\end{equation}
if and only if $\schan E$ takes the form \eqref{eq:assoc} with $n=3$; however, the proof is analogous to the $j=2$ step of the inductive argument and thus we omit it.
\end{proof}

A couple of comments on the notation are in order.
If $\cchan A$ and $\cchan C$ are sets of channels and $\cchan B$ is a stochastic class, then
\begin{equation}
\cchan A \compt \cchan B \compt \cchan C= \cchan A \circ \cchan B \circ \cchan C.
\end{equation}
As a special case of the above, if $A$ and $C$ are channels and $\schan B$ a stochastic channel, the composition
\begin{equation}
A \circ \schan B \circ C
\end{equation}
is a singleton stochastic class, but it will also be identified with its single element.

\subsection{Approximation}\label{app:aprox}

The statements \eqref{eq:subset_appr} and \eqref{eq:transit_appr} follow respectively (and trivially) from the following distance axioms:
\begin{equation}
\begin{gathered}
\dist{\schan A}{\schan B}=0\iff \schan A=\schan B,\\
\dist{\schan A}{\schan C}\leq \dist{\schan A}{\schan B}+\dist{\schan B}{\schan C}.
\end{gathered}
\end{equation}
As for \eqref{eq:compt_appr}, due to~\eqref{eq:transit_appr} it suffices to show that $\cchan A\compt\cchan B\appr{\epsilon} \cchan C\compt\cchan B$ and $\cchan C\compt\cchan B\appr{\delta} \cchan C\compt\cchan D$. We show the former, and the later is analogous.

Let $\schan E=\stoc{p_{ij}}{A_i \circ B_j}\in\cchan A\compt\cchan B$, with $p_i:=p_{i(j)}\neq 0$ without loss of generality.
By assumption there exists $\schan C\in\cchan C$, non-negative $\epsilon_i$, $\alpha_k$ and channels $C_k$ such that, in formal sum notation,
\begin{equation}
\schan C= \sum_i (p_{i}-\epsilon_i)A_i+\sum_k \alpha_k C_k, 
\qquad 
\epsilon_{(i)}=\alpha_{(k)}\leq \epsilon.
\end{equation}
Recall here that $\epsilon_{(i)}$ denotes $\sum_i \epsilon_i$.
Consider the formal sum
\begin{equation}
\begin{gathered}
\schan E':=\sum_{ij} p'_{ij} A_iB_j +\sum_{kj} q_{kj} C_kB_j, 
\\
p'_{ij}:=p_{ij}\left(1-\frac {\epsilon_i}{p_i}\right),
\qquad 
q_{kj}:= \sum_i \frac {p_{ij} \epsilon_i \alpha_k}{p_i \,\alpha_{(l)} }.
\end{gathered}
\end{equation}
It is easy to check that
\begin{equation}
p'_{(i)j}+q_{(k)j}=p_{(i)j}, 
\qquad
p'_{i(j)}=p_i-\epsilon_i,
\qquad
q_{k(j)}=\alpha_k,
\end{equation}
so that $\schan E'\in \schan C\compt \cchan B\subseteq \cchan C\compt\cchan B$.
Moreover, as required,
\begin{equation}
2\,\dist{\schan E}{\schan E'}\leq \sum_{ij} |p_{ij}-p'_{ij}|+q_{(kj)}=2\alpha_{(k)}\leq 2\epsilon.
\end{equation}

\section{Pauli operations}

This appendix compiles some properties and definitions related to Pauli operations required below.

\subsection{Error syndrome}

Given a stabilizer code each Pauli operation $\chan E$ has a well-defined error syndrome: there is a unique $\sigma$ such that
\begin{equation}\label{eq:defsyndpre}
\chan E \chan P_0 = \chan P_\sigma \chan E.
\end{equation}
For each Pauli operation $E$ and such $\sigma$, define
\begin{equation}\label{eq:defsynd}
\chop E:=\sigma,\qquad\synd{\chan E}:=\omega_\sigma.
\end{equation}
The following relations will be useful:
\begin{align}
\synd{\synd {\chan E} \,\synd {\chan D}}&=\synd{\chan E\chan D},\label{eq:prop_synd1}\\
\chop{\synd{ED}}&\subseteq \chop{\synd E}\cup\chop{\synd D}\label{eq:prop_synd2}.
\end{align}

\subsection{Groups}

Pauli operations form a group. For any pair of elements $\chan E, \chan D$,
\begin{equation}\label{eq:basic_pauli}
\chan E^2=\id, \qquad \chan E\chan D=\chan D\chan E.
\end{equation}
Given a stabilizer code, an important subgroup is the gauge group, which is generated by gauge operations.

A Pauli operation $\chan L$ is a logical operation if it cannot be detected through error correction, \emph{i.e.} if its syndrome is trivial:
\begin{equation}
\chop L = 0.
\end{equation}
Logical operations form a group.
The gauge group is the subgroup of logical operations that do not affect the logical subsystem.

\subsection{Failure rates}

If ideal error correction is applied after a single Pauli operation $\chan E$ hits an encoded state, the net result is a logical Pauli operation:
\begin{equation}
\sum_\sigma \omega_\sigma\,P_\sigma \,E\,P_0=\synd E\,E\,P_0.
\end{equation}
Here, $E$ is correctable if $\synd EE$ is in the gauge group. 
In particular
\begin{equation}
\fail{E}=\begin{cases}
0 & \text{if $E$ is correctable},\\
1 & \text{otherwise},
\end{cases}
\end{equation}
and for a Pauli channel,
\begin{equation}
\fail{\stocg{p_i}{E_i}}=\sum_i p_i \,\fail {E_i}.
\end{equation}

\begin{prop}\label{prop:fail}
For any Pauli operations $\chan E$, $\chan D$, 
\begin{align}
\fail{\synd{\chan D\chan E}E}&=\fail{\synd{\chan D}\chan E}\label{eq:fail_eq}.
\end{align}
\end{prop}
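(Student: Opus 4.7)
The plan is to reduce both failure rates to the same correctability condition. Recall that, from the discussion preceding the proposition, the failure rate of a single Pauli operation $F$ is $0$ or $1$ depending on whether $\synd F\,F$ lies in the gauge group $\mathfrak G$, so I only need to show that the two operators $\synd{\synd{\chan D\chan E}\chan E}\cdot\synd{\chan D\chan E}\chan E$ and $\synd{\synd{\chan D}\chan E}\cdot\synd{\chan D}\chan E$ belong to $\mathfrak G$ under the same circumstances. My route is to compute both of them explicitly and observe that they coincide as Pauli operators.

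First I would record two preparatory identities. The first is that $\omega_\sigma$ has syndrome $\sigma$ itself, so $\chop{\synd{\chan X}}=\chop{\chan X}$ for every Pauli operation $\chan X$; this is immediate from combining $\omega_\sigma\chan P_\sigma=\chan P_0\omega_\sigma$ with \eqref{eq:defsyndpre} applied to $\omega_\sigma$ and the involution $\omega_\sigma^2=\id$. The second is the additivity $\chop{\chan F\chan G}=\chop{\chan F}+\chop{\chan G}$, which follows from \eqref{eq:defsyndpre} once one notes that every Pauli operation intertwines all syndrome projectors as $\chan F\chan P_\tau=\chan P_{\tau+\chop{\chan F}}\chan F$. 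Together these give
\begin{align}
\chop{\synd{\chan D\chan E}\chan E}&=\chop{\chan D\chan E}+\chop{\chan E}=\chop{\chan D},\\
\chop{\synd{\chan D}\chan E}&=\chop{\chan D}+\chop{\chan E}=\chop{\chan D\chan E},
\end{align}
and therefore $\synd{\synd{\chan D\chan E}\chan E}=\synd{\chan D}$ and $\synd{\synd{\chan D}\chan E}=\synd{\chan D\chan E}$.

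With these in hand, the correctability tests become $\synd{\chan D}\,\synd{\chan D\chan E}\,\chan E\in\mathfrak G$ for the left-hand side of \eqref{eq:fail_eq} and $\synd{\chan D\chan E}\,\synd{\chan D}\,\chan E\in\mathfrak G$ for the right-hand side, and these two operators are equal since Pauli operations commute, \eqref{eq:basic_pauli}. Hence $\synd{\chan D\chan E}\,\chan E$ and $\synd{\chan D}\,\chan E$ are correctable simultaneously, and their failure rates coincide. The only mild obstacle is deriving the two preparatory identities, which are not stated explicitly among the numbered equations above, but both follow in a few lines from the definitions, so no essential difficulty is anticipated.
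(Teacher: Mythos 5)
Your proof is correct. It rests on the same algebraic ingredients as the paper's proof --- commutativity of Pauli operations, the identity $\chop{\synd{\chan X}}=\chop{\chan X}$, and additivity of syndromes (which the paper packages as the unproved relation \eqref{eq:prop_synd1}) --- but it organizes them differently. The paper first establishes the invariance $\fail{\omega \chan L}=\fail{\chan L}$ for logical $\chan L$ and correction operations $\omega$, and then applies it twice around the common pivot $\chan L=\synd{\chan D}\,\synd{\chan D\chan E}\,\chan E$, writing $\synd{\chan D\chan E}\chan E=\synd{\chan D}\chan L$ and $\synd{\chan D}\chan E=\synd{\chan D\chan E}\chan L$. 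You instead reduce both sides of \eqref{eq:fail_eq} to the correctability criterion and check directly that the two residues $\synd{\synd{\chan D\chan E}\chan E}\,\synd{\chan D\chan E}\chan E$ and $\synd{\synd{\chan D}\chan E}\,\synd{\chan D}\chan E$ are literally the same Pauli operator --- which is in fact exactly the paper's pivot $\chan L$. Your version makes the reduction to correctability fully explicit and is arguably more transparent; the paper's version isolates a reusable invariance lemma and leans on \eqref{eq:prop_synd1} rather than rederiving the syndrome arithmetic. Your two preparatory identities are correctly justified: the only point worth flagging is that the intertwining $\chan F\chan P_\tau=\chan P_{\tau+\chop{\chan F}}\chan F$ does not follow from \eqref{eq:defsyndpre} alone (which only covers $\tau=0$) but requires the standard stabilizer-code fact that a Pauli operator permutes the syndrome sectors according to which check operators it anticommutes with; since this is immediate, there is no gap.
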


\begin{proof}
In this proof (\ref{eq:basic_pauli}, \ref{eq:prop_synd1}) will be used repeatedly and implicitly.
For any logical Pauli operation $\chan L$ and any correction operation $\omega$
\begin{equation}\label{eq:omegaL}
\fail{\omega \chan L}=\fail{\chan L},
\end{equation}
because
\begin{equation}
\synd{\omega \chan L} \omega \chan L = \omega\omega \chan L = \synd {\chan L} \chan L.
\end{equation}
Using \eqref{eq:omegaL} twice (the Pauli operation in parentheses is logical), gives \eqref{eq:fail_eq}:
\begin{multline}
\fail{\synd {\chan D \chan E}\chan E}=\fail{\synd{\chan D}(\synd{\chan D}\,\synd {\chan D \chan E}\chan E)}=\fail{\synd{\chan D}\,\synd {D E}E}=\\
=\fail{\synd{D E}( \synd{\chan D}\,\synd {D E}E)}=\fail{\synd{\chan D} E},
\end{multline}
\end{proof}

\section{Syndrome distributions}\label{sec:syndrome_classes}

The following construction was introduced in~\cite{bombin:2015:single-shot} to keep track of the syndrome distributions of Pauli channels in a given set.
Here it will be used differently, in particular by means of the proposition below.
\begin{defn}\label{def:eff}
For any $\cchan A\subseteq \cchan P$, the stochastic class $\synd{\cchan A}$ contains the stochastic channels
\begin{equation}
\stocg{p_i}{\synd E_i}
\end{equation}
such that $\cchan A$ contains an element of the form
\begin{equation}
\stocg{p_i}{E_i}.
\end{equation}
\end{defn}

\begin{prop}\label{prop:appr_fail}
Given a stabilizer code, for any $\cchan A\subseteq \cchan P$
\begin{equation}
\cchan A \appr{\fail{\cchan A}}\synd{\cchan A}.
\end{equation}
\end{prop}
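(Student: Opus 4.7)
The plan is to turn Definition~\ref{def:appr} into a concrete estimate. Given $\schan A \in \cchan A$ written $\schan A = \stocg{p_i}{E_i}$, the obvious candidate in $\synd{\cchan A}$ is $\synd{\schan A} := \stocg{p_i}{\synd E_i}$, which lies in $\synd{\cchan A}$ directly from the definition of that class. The whole proof therefore reduces to the single estimate
\begin{equation}
\dist{\schan A}{\synd{\schan A}} \leq \fail{\schan A},
\end{equation}
since $\fail{\schan A} \leq \fail{\cchan A}$ then concludes.

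For that estimate I would use the common index distribution $p_i$ as a coupling between the two stochastic channels. The statistical distance is bounded above by the coupling's probability of disagreement, yielding
\begin{equation}
\dist{\schan A}{\synd{\schan A}} \leq \sum_{i:\, E_i \neq \synd E_i} p_i,
\end{equation}
where the inequality $E_i \neq \synd E_i$ is to be read inside the $\stocg{\cdot}{\cdot}$ notation, i.e., modulo the gauge group $\mathfrak G$.

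The crux is then to identify the condition $E_i \neq \synd E_i$ (mod $\mathfrak G$) with $\fail{E_i} = 1$. This is essentially the content of the correctability discussion in the previous appendix: $\fail{E} = 0$ exactly when $\synd E\, E$ lies in the gauge group, and $\fail{E} = 1$ otherwise. So the indicator in the coupling bound collapses to $\fail{E_i}$, and the additivity formula $\fail{\stocg{p_i}{E_i}} = \sum_i p_i\, \fail{E_i}$ finishes the estimate. The only point requiring care is that the statistical distance is meaningful and admits the coupling upper bound inside the $\stocg{\cdot}{\cdot}$ notation, where Pauli operations are quotiented by $\mathfrak G$; this should be routine since the notation already merges such terms, and I do not anticipate a genuine obstacle here—the result is essentially an immediate bookkeeping consequence of the definition of $\synd{\schan A}$ and of the characterization of correctability.
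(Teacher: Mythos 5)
Your proposal is correct and follows essentially the same route as the paper: both choose the candidate $\stocg{p_i}{\synd{E_i}}$ (with gauge channels matched to those of $\schan A$ on the correctable terms), observe that $E_i$ and $\synd{E_i}$ coincide modulo $\mathfrak G$ exactly when $E_i$ is correctable since $\synd{E_i}E_i$ is then a gauge operation, and bound the statistical distance by the total probability of uncorrectable terms, which is $\fail{\schan A}\leq\fail{\cchan A}$. The paper phrases this as an explicit split of the index set into correctable and uncorrectable parts rather than as a coupling bound, but the content is identical, and the gauge-dressing point you flag as "routine" is handled in the paper exactly as you anticipate.
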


\begin{proof}
For every $\schan A\in\cchan A$ there exist Pauli operations $E_i$, $E'_j$, with the $E_i$ correctable and the $E'_j$ uncorrectable, and channels $G_i$, $G'_j$ with Kraus operators in $\mathfrak G$, such that as a formal sum
\begin{equation}
\schan A = \sum_i p_i E_iG_i+\sum_j p'_j E'_jG'_j,\qquad p'_{(j)}= \fail{\schan A}.
\end{equation}
The following stochastic channel is an element of $\synd{\cchan A}$:
\begin{equation}
\sum_i p_i \synd{E_i}+\sum_j p'_j \synd{E'_j}.
\end{equation}
Since $\synd{E_i}=E_i(\synd {E_i}E_i)$, and the parentheses contains a gauge operation, the following is also an element of $\synd{\cchan A}$,
\begin{equation}
\schan A':=\sum_i p_i {E_i}G_i+\sum_j p'_j \synd{E'_j},
\end{equation}
and as required
\begin{equation}
\dist{\schan A}{\schan A'}\leq p'_{(j)}\leq\fail{\cchan A}.
\end{equation}
\end{proof}

\section{Locality and composition}\label{app:local_composition}

The composition of different stochastic classes with local features, namely $\cchan \Lambda_\lambda$, $\cchan L_\lambda$ and $\cchan N_\tau^\mathrm{exc}$, can be handled with the following abstract result.
\begin{prop}\label{prop:comp_local}
Let $F$ be a function with a domain set $\cchan X$ of channels closed under composition,  and a codomain that is the power set of some set $Y$, such that for any $A,B\in \cchan X$
\begin{equation}\label{eq:ruleF}
F(AB)\subseteq F(A)\cup F(B).
\end{equation}
Let $\cchan S_\epsilon$, $\epsilon\geq 0$, be the set of stochastic channels $\stoc{p_i}{E_i}$, $E_i\in \cchan X$, such that for any $R\subseteq Y$
\begin{equation}
\sum_{i | R\subseteq F(E_i)} p_i\leq \epsilon^{|R|}.
\end{equation}
Then for $\alpha, \beta\geq 0$,
\begin{align}
\cchan S_\alpha\compt\cchan S_{\beta}&\subseteq \cchan S_{2\max(\alpha,\beta)^{1/2}},\\
\cchan S_\alpha\circ\cchan S_{\beta}&\subseteq \cchan S_{\alpha+\beta}.\label{eq:circlocal}
\end{align}
\end{prop}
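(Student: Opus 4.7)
The plan is to reduce both bounds to a single combinatorial inequality and then branch on whether the joint distribution factorises. Fix $R\subseteq Y$ and an element $\schan E=\stoc{p_{ij}}{A_iB_j}$ of either composition. The hypothesis $F(A_iB_j)\subseteq F(A_i)\cup F(B_j)$ guarantees that whenever $R\subseteq F(A_iB_j)$, the set $R$ splits canonically as $S\cup(R\setminus S)$ with $S:=R\cap F(A_i)\subseteq F(A_i)$ and $R\setminus S\subseteq F(B_j)$. First I would use this unique decomposition to rewrite the sum $\sum_{(i,j):R\subseteq F(A_iB_j)} p_{ij}$ as a sum over $S\subseteq R$ of contributions satisfying $S\subseteq F(A_i)$ and $R\setminus S\subseteq F(B_j)$ (relaxing the defining equality for $S$ to an inclusion to yield a clean upper bound):
\[
\sum_{(i,j)\,:\,R\subseteq F(A_iB_j)} p_{ij}\;\leq\;\sum_{S\subseteq R}\;\sum_{\substack{(i,j)\,:\,S\subseteq F(A_i)\\ R\setminus S\subseteq F(B_j)}}p_{ij}.
\]

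For the uncorrelated case~\eqref{eq:circlocal} the next step is routine: the factorisation $p_{ij}=p_iq_j$ turns the inner sum into a product, the locality hypotheses bound it by $\alpha^{|S|}\beta^{|R|-|S|}$, and the outer sum over $S\subseteq R$ collapses to the binomial expansion $(\alpha+\beta)^{|R|}$, which is what is required.

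The correlated case is the main obstacle, because without a product structure one cannot control both factors simultaneously. My approach would be to bound the inner double sum separately by each marginal: marginalising over $j$ (so that $\stoc{p_{i(j)}}{A_i}\in\cchan S_\alpha$ applies) gives $\alpha^{|S|}$, while marginalising over $i$ gives $\beta^{|R|-|S|}$. The elementary inequality $\min(x,y)\leq\sqrt{xy}$ then produces the symmetric estimate $\alpha^{|S|/2}\beta^{(|R|-|S|)/2}\leq\max(\alpha,\beta)^{|R|/2}$, which crucially no longer depends on $S$. Summing this $S$-independent bound over the $2^{|R|}$ subsets of $R$ yields exactly $\bigl(2\max(\alpha,\beta)^{1/2}\bigr)^{|R|}$, and the square-root in the target parameter is precisely the price paid for abandoning factorisation.
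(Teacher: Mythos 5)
Your proposal is correct and follows essentially the same route as the paper's proof: the same decomposition over subsets $S\subseteq R$ justified by \eqref{eq:ruleF}, the same marginal bounds $\alpha^{|S|}$ and $\beta^{|R\setminus S|}$, and the same binomial collapse in the uncorrelated case. The only cosmetic difference is that you pass from $\min\bigl(\alpha^{|S|},\beta^{|R\setminus S|}\bigr)$ to the $S$-independent bound via $\min(x,y)\leq\sqrt{xy}$, whereas the paper uses $\min\bigl(\epsilon^{|S|},\epsilon^{|R\setminus S|}\bigr)\leq\epsilon^{\max(|S|,|R\setminus S|)}\leq\epsilon^{|R|/2}$ with $\epsilon=\max(\alpha,\beta)$.
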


\begin{proof}
Given $(p_{ij},\chan A_i\circ \chan B_j)\in\cchan S_\alpha\compt\cchan S_\beta$, for any $S\subseteq Y$
\begin{align}\label{eq:comp_local1}
\sum_{i | S\subseteq F(\chan A_i)} p_{i(j)} \leq \alpha^{|S|},\qquad
\sum_{j | S\subseteq F(\chan B_j)} p_{(i)j} \leq \beta^{|S|}.
\end{align}
Let $\epsilon:=\max(\alpha,\beta)$.
The required result is
\begin{multline}\label{eq:sum_cases}
\sum_{i,j | R\subseteq F(A_i B_j)} p_{ij} \leq
\sum_{S\subseteq R}\, \sum_{i | S\subseteq F(A_i)}\,\sum_{j|R-S\subseteq F(B_j)} p_{ij} \leq
\\
\leq
\sum_{S\sub R}\min(\alpha^{|S|},{\beta}^ {|R-S|})
\leq
\sum_{S\sub R}\min({\epsilon}^{|S|},{\epsilon}^ {|R-S|})
\leq
\\
\leq
\sum_{S\sub R}{\epsilon}^{\max(|S|,|R-S|)})
\leq
2^{|R|}{\epsilon}^{|R|/2},
\end{multline}
where the first inequality makes use of~\eqref{eq:ruleF} and the second of~\eqref{eq:comp_local1}.

Given $\stoc{p_i}{A_i}\in\cchan S_\alpha$ and $\stoc{q_j}{ B_j}\in\cchan S_\beta$ their composition is an element of $\cchan S_{\alpha+\beta}$ because
\begin{multline}\label{eq:sum_cases2}
\sum_{i,j | R\subseteq F(A_i B_j)} p_iq_j \leq
\sum_{S\subseteq R}\, \sum_{i | S\subseteq F(A_i)}\,p_i\sum_{j|R-S\subseteq F(B_j)} q_{j} \leq
\\
\leq
\sum_{S\sub R}\alpha^{|S|}\beta^{|R-S|}
=(\alpha+\beta)^{|R|}.
\end{multline}
\end{proof}

If $\cchan X$ is the set of all channels and $F(A)=\supp A$, then 
\begin{equation}
\cchan S_\epsilon=\cchan \Lambda_\epsilon,
\end{equation}
which gives~(\ref{eq:Lambda1}, \ref{eq:Lambda2}).
If $\cchan X$ is the set of Pauli operations and $F(A)=\supp A$, then 
\begin{equation}
\cchan S_\lambda\subseteq \cchan L_\lambda,
\end{equation}
and conversely 
\begin{equation}
\stocg{p_i}{E_i}\in\cchan L_\lambda\implies \sset{\stocg{p_i}{E_i}}\cap\cchan S_\lambda\neq\emptyset. 
\end{equation}
It follows that
\begin{align}
\cchan L_\lambda\compt\cchan L_{\lambda'}&\subseteq \cchan L_{2\max(\lambda,\lambda')^{1/2}},\\
\cchan L_\lambda\circ\cchan L_{\lambda'}&\subseteq \cchan L_{\lambda+\lambda'}.
\end{align}
If $\cchan X=\cchan P$ and $F(A)=\chop A$, then 
\begin{equation}
\cchan N^\mathrm{exc}_\tau\subseteq \cchan S_\tau,
\end{equation}
and conversely for any $\cchan A\subseteq \cchan P$,
\begin{equation}
\cchan A\subseteq \cchan S_\tau\iff \synd{\cchan A} \subseteq\cchan N^\mathrm{exc}_\tau.
\end{equation}
Using proposition~\ref{prop:appr_fail}, it follows that
\begin{align}
\cchan N^\mathrm{exc}_\tau\compt\cchan N^\mathrm{exc}_{\tau'}&\appr{\fail{\cchan N^\mathrm{exc}_\tau\compt\cchan N^\mathrm{exc}_{\tau'}}} \cchan N^\mathrm{exc}_{2\max(\tau,\tau')^{1/2}},\label{eq:resNNN}\\
\cchan N^\mathrm{exc}_\tau\circ\cchan N^\mathrm{exc}_{\tau'}&\appr{\fail{\cchan N^\mathrm{exc}_\tau\circ\cchan N^\mathrm{exc}_{\tau'}}} \cchan N^\mathrm{exc}_{\tau+\tau'}.
\end{align}
The following proposition can be used to bound the above failure rates (the case of $\circ$ is analogous, and in any case can be found in~\cite{bombin:2015:single-shot}).

\begin{prop}\label{prop:B}
If there exists an integer $m$ and a set $B$ of sets of check operators such that (i) each element of $B$ contains $m$ check operators, and (ii)
for every syndromes $\sigma, \sigma'$,
\begin{equation}
\fail{\omega_\sigma\omega_{\sigma'}}=1\implies \exists \mu\in B\suchthat \mu\subseteq\sigma\cup\sigma',
\end{equation}
then
\begin{align}\label{eq:B}
\fail{\cchan N_\tau^\mathrm{exc}\compt\cchan N_{\tau'}^\mathrm{exc}}\leq|B|(2\max(\tau,\tau')^{1/2})^{m}.
\end{align}
\end{prop}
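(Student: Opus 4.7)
\smallskip
\noindent\textbf{Proof plan for Proposition~\ref{prop:B}.}
The failure rate of any element $\stoc{p_{\sigma\sigma'}}{\omega_\sigma\omega_{\sigma'}}\in\cchan N_\tau^\mathrm{exc}\compt\cchan N_{\tau'}^\mathrm{exc}$ equals $\sum_{\sigma,\sigma'}p_{\sigma\sigma'}\fail{\omega_\sigma\omega_{\sigma'}}$, so my goal is to bound the sum of $p_{\sigma\sigma'}$ over pairs with $\fail{\omega_\sigma\omega_{\sigma'}}=1$. The hypothesis provides a combinatorial ``covering'' of these bad events: for every such pair there exists $\mu\in B$ with $\mu\subseteq\sigma\cup\sigma'$. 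The first step is therefore a union bound,
\begin{equation}
\sum_{\fail{\omega_\sigma\omega_{\sigma'}}=1} p_{\sigma\sigma'}\;\leq\;\sum_{\mu\in B}\;\sum_{\sigma,\sigma':\,\mu\subseteq\sigma\cup\sigma'} p_{\sigma\sigma'},
\end{equation}
which reduces the problem to bounding, uniformly in $\mu$ with $|\mu|=m$, the probability that $\mu$ is covered by $\sigma\cup\sigma'$.

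Fixing $\mu$, I would stratify the event $\mu\subseteq\sigma\cup\sigma'$ by the subset $S=\mu\cap\sigma$: any covering pair satisfies $S\subseteq\sigma$ and $\mu\setminus S\subseteq\sigma'$ for at least one $S\subseteq\mu$. Summing over this (at most overcounting) decomposition and using the marginal constraints from the definition of $\cchan N_\tau^\mathrm{exc}$ — namely $\sum_{\sigma\supseteq S}p_{\sigma(\sigma')}\leq\tau^{|S|}$ and $\sum_{\sigma'\supseteq \mu\setminus S}p_{(\sigma)\sigma'}\leq{\tau'}^{m-|S|}$ — one obtains
\begin{equation}
\sum_{\sigma,\sigma':\,\mu\subseteq\sigma\cup\sigma'} p_{\sigma\sigma'}\;\leq\;\sum_{S\subseteq\mu}\min\bigl(\tau^{|S|},\,{\tau'}^{m-|S|}\bigr).
\end{equation}

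The final step is the elementary estimate already used in the proof of Proposition~\ref{prop:comp_local}: with $\epsilon:=\max(\tau,\tau')$, one has $\min(\tau^{|S|},{\tau'}^{m-|S|})\leq\epsilon^{\max(|S|,m-|S|)}\leq\epsilon^{m/2}$, so the sum over $S\subseteq\mu$ is bounded by $2^m\epsilon^{m/2}=(2\max(\tau,\tau')^{1/2})^m$. Summing over the $|B|$ choices of $\mu$ yields the claimed bound.

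The plan is essentially a variant of the counting used in Proposition~\ref{prop:comp_local}, but one must be careful that the relevant event here is $\mu\subseteq\sigma\cup\sigma'$ rather than $\mu\subseteq\chop{\omega_\sigma\omega_{\sigma'}}$; the latter is generally strictly smaller (since $\chop{\omega_\sigma\omega_{\sigma'}}=\sigma+\sigma'$ as a binary sum by~\eqref{eq:prop_synd2}), so the local-composition lemma cannot be invoked as a black box and the union/decomposition argument has to be redone by hand. This rearrangement, and confirming that the overcount from choosing $S$ only weakens the inequality, is the only subtle point; everything else is bookkeeping with the marginal bounds built into Definition~\eqref{eq:Nexc}.
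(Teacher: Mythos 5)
Your proposal is correct and follows essentially the same route as the paper: a union bound over $\mu\in B$, a decomposition of the covering event $\mu\subseteq\sigma\cup\sigma'$ into sub-events indexed by $S\subseteq\mu$, the two marginal constraints from Definition~\eqref{eq:Nexc} giving the $\min(\tau^{|S|},{\tau'}^{m-|S|})$ bound, and the $2^{m}\max(\tau,\tau')^{m/2}$ estimate recycled from the proof of Proposition~\ref{prop:comp_local}. Your closing observation that Proposition~\ref{prop:comp_local} cannot be invoked as a black box (because the relevant event involves $\sigma\cup\sigma'$ rather than the syndrome of the product) is exactly why the paper also redoes the computation by hand rather than citing that result.
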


\begin{proof}
Let $\epsilon=2\max(\tau,\tau')^{1/2}$. 
Given 
\begin{equation}
\schan E=\stocg{q_{\sigma\sigma'}}{\omega_\sigma\circ\omega_{\sigma'}}\in\cchan N_\tau^\mathrm{exc}\compt\cchan N_{\tau'}^\mathrm{exc},
\end{equation}
its failure rate can be bounded using the same technique as in~\eqref{eq:sum_cases}:
\begin{multline}
\fail {\schan E}=\sum_{ij} q_{\sigma\sigma'} \,\fail{\omega_\sigma\omega_{\sigma'}} \leq 
\\
\leq \sum_{\mu\in B}\sum_{\mu'\subseteq \mu}\, \sum_{\sigma | \mu'\subseteq \sigma}\,\sum_{\sigma'|\mu-\mu'\subseteq \sigma'} q_{\sigma\sigma'} \leq
\sum_{\mu\in B} \epsilon^{|\mu|}
= |B| \epsilon^{m}.
\end{multline}
\end{proof}

\section{Local syndromes}\label{app:local_syndromes}

The residual noise model $\cchan N_\tau$ is well suited to self-correcting families of topological stabilizer codes.
In~\cite{bombin:2015:single-shot} such families were characterized with a series of conditions, from which further properties were then extracted.
Here only some of them need to be recalled.

To begin with,
\begin{equation}
\fail{\cchan L_\lambda}=f_1(\lambda),\qquad\synd{\cchan L_\lambda} \subseteq \cchan N_{g_1(\lambda)},
\end{equation}
for the following choices of $f_1$ and $g_1$, which are compatible with the requirement of section~\ref{sec:residual_noise},
\begin{equation}
f_1(\lambda) = n \left(\frac \lambda{\lambda_0}\right)^{bn^i} , \qquad g_1(\lambda) = v_1 \lambda^{1/v_2},
\end{equation}
where $n$ is the number of physical qubits, and the rest of symbols are positive constants that depend on the code family.
Due to proposition~\ref{prop:appr_fail}, condition \eqref{eq:condition_LN} is met.

Second, there exists a set $B$ of syndromes satisfying the assumption of proposition~\ref{prop:B} with
\begin{equation}
m=cn^j,\qquad |B|\leq k n \tau_0^{-cn^j},
\end{equation} 
where again all symbols but $n$ are positive constants that depend on the code family.
In particular, $kn$ is a bound on the number of check operators.
According to proposition~\ref{prop:B} and \eqref{eq:resNNN}, condition \eqref{eq:condition_NNN} is met for the choices
\begin{alignat}{1}
f_2(\tau_1,\tau_2)&=k n \left(\frac{2\max(\tau_1,\tau_2)^{1/2}}{\tau_0}\right)^{cn^j}, \label{eq:nexc_f2}\\
g_2(\tau_1,\tau_2)&=2 \max(\tau_1,\tau_2)^{1/2}.\label{eq:nexc_g2}
\end{alignat}

\section{Proof of lemma~\ref{lem:central}}\label{app:proof}

This lemma states that for $\cchan R\subseteq \cchan Q$ and $\cchan E\subseteq\cchan P$,
\begin{equation}\label{eq:centralp}
\cchan R\compt \cchan E\circ P_0 \appr{\fail { \mathrm{eff}(\cchan R)\compt\cchan E}} \mathrm{eff}(\cchan R)\circ P_0,
\end{equation}

\begin{proof}
Every element of $\cchan R\compt \cchan E\circ \chan P_0$ takes the form $\schan T\circ\chan P_0$ for some
\begin{equation}
\schan T = \stocg{p_{\sigma i}}{R_\sigma \circ E_i}\in\schan R\compt \schan E,
\end{equation}
where  $\schan R\in\cchan R$ and $\schan E\in\cchan E$.
Consider the set 
\begin{equation}\label{eq:F}
\cchan F_\schan T:=\sset{\stocg{p_{\sigma i}}{\synd {\omega_\sigma E_i}E_i}}.
\end{equation}
It satisfies
\begin{align}\label{}
\synd{\cchan F_\schan T}&= \sset{\stocg{p_{\sigma i}}{\synd{\synd{\omega_\sigma E_i}E_i}}}= 
\sset{\stocg{p_{\sigma(i)}}{\omega_\sigma}} =
\mathrm{eff} (\schan R),
\nonumber\\
\fail {\cchan F_\schan T}&=\sum_{\sigma i} p_{\sigma i} \,\fail {\synd {\omega_\sigma E_i}E_i}=
\nonumber
\\
&=\sum_{\sigma i} p_{\sigma i} \,\fail { \omega_\sigma E_i} 
\leq
\fail {\mathrm{eff}(\schan R)\compt\schan E},
\end{align}
where \eqref{eq:prop_synd1} was used in the computation of $\synd {\cchan F_\schan T}$ and~\eqref{eq:fail_eq} in that of $\fail{\cchan F_\schan T}$.
From (\ref{eq:defsyndpre}, \ref{eq:defsynd}) we get
\begin{equation}
\schan T\circ \chan P_0\in\cchan F_\schan T\circ \chan P_0.
\end{equation}
Let $\cchan F$ be union of all $\cchan F_\schan T$ of the form \eqref{eq:F} for some such $\schan T$.
Then
\begin{equation}\label{eq:central}
\begin{gathered}
\cchan R\compt \cchan E\circ \chan P_0\subseteq \cchan F \circ\chan P_0,
\\
\synd{\cchan F}\subseteq\mathrm{eff}(\cchan R),
\qquad
\fail {\cchan F}\leq\fail {\mathrm{eff}(\cchan R)\compt\cchan E},
\end{gathered}
\end{equation}
which via proposition \ref{prop:appr_fail} yields the desired result.
\end{proof}

The proof for the uncorrelated composition $\cchan R\circ \cchan E$ is entirely analogous; basically, it suffices to substitute $p_{\sigma i}$ with $q_\sigma p_i$.

\section{Putting it together}\label{app:together}

The purpose of this appendix is to show that assuming that $\cchan N_\tau$ and $\cchan R_\eta$ satisfy the relations~(\ref{eq:condition_LN}-\ref{eq:condition_zero}) and~\eqref{eq:condition_RN}, then the conditions~(\ref{eq:goal_RNN}-\ref{eq:goal_N}) are met for the choices~\eqref{eq:choice}.

The relation \eqref{eq:goal_RNN} follows, using~\eqref{eq:transit_appr}, from
\begin{multline}
\cchan R_\eta\compt\cchan N_{\tau_2} \circ C \subseteq 
\cchan R_\eta\compt\cchan N_{\tau_2} \circ P_0\circ C \appr{\epsilon}\\
\appr\epsilon \mathrm{eff}(\cchan R_\eta)\compt \cchan N_{\tau_2} \circ P_0\circ C \appr{f_4(\eta)}
\cchan N_{\tau_1}\circ C,
\end{multline}
where
\begin{multline}
\epsilon=\fail{\mathrm{eff}(\cchan R_\eta)\compt \cchan N_{\tau_2}}\leq 
f_4(\eta) +\fail{\cchan N_{\tau_1}\compt \cchan N_{\tau_2}}\leq
\\\leq
f_4(\eta) +f_2(\tau_1,\tau_2)+\fail{\cchan N_{g_2(\tau_1,\tau_2)}}\leq
\\\leq
f_4(\eta) +f_2(\tau_1,\tau_2)+f_3(g_2(\tau_1,\tau_2)).
\end{multline}
The relations (\ref{eq:goal_LNN}, \ref{eq:goal_LN}) follow from
\begin{multline}
\cchan L_{\lambda}\subseteq 
\cchan L_{\lambda}\compt \cchan N_{\tau_1} \appr{f_1(\lambda)} 
\cchan N_{g_1(\lambda)}\compt \cchan N_{\tau_1}\appr{f_2(g_1(\lambda),\tau_1)}
\cchan N_{\tau_2},
\end{multline}
Finally, \eqref{eq:goal_N} is trivial.

\section{Effective noise}\label{app:effective}

Relations of the form~\eqref{eq:condition_RN} with $f_4(\eta)=0$, \emph{i.e.} of the form
\begin{equation}
\mathrm{eff}(\cchan R_\eta)\subseteq \cchan N_{g_4(\eta)},
\end{equation}
are given in~\cite{bombin:2015:single-shot} within the proofs of the two theorems in that work, each corresponding to a different kind of stabilizer code family and a different kind of residual noise $\cchan N_\tau$.

The first case is that of self-correcting stabilizer codes and residual noise $\cchan N_\tau^\mathrm{exc}$, and the dependence of $\tau$ on $\eta$ is\begin{equation}
g_4(\eta)=\frac {(\eta/\eta_0)^{1/2}}{1-(\eta/\eta_0)^{1/2}},
\end{equation}
for some positive constant $\eta_0$ dependent on the code family.

The second case is that of a special kind of stabilizer subsystem codes (termed `$K$-confining', where $K$ is some positive constant for a given code family) and residual noise
\begin{equation}
\cchan N_\tau^\mathrm{loc}:= \synd{\cchan L_\tau},
\end{equation}
with a dependence
\begin{equation}
g_4(\eta)=\frac {(\eta/\eta_0)^{k}}{1-(\eta/\eta_0)^{k}},\qquad k:=\frac{1}{2(1+K)}.
\end{equation}
This result is not immediately applicable to the approach of section~\ref{sec:local_noise}, which requires $\cchan N_\tau=\cchan L_\tau$.
Fortunately, the proof given in~\cite{bombin:2015:single-shot} can be easily adapted.
Namely, in that proof for any given $\schan R\in \cchan R_\eta$ some $\schan D\in\cchan P$ is constructed with
\begin{equation}
\mathrm{eff} (R)= \synd{\schan D},
\end{equation}
and then it is proven that
\begin{equation}
\schan D\in\cchan L_\tau.
\end{equation}
In particular, the construction requires $\schan D=\stoc{p_i}{E_i}$ to be such that $\supp {E_i}$ is minimal for every $i$, with $\synd{E_i}$ fixed.
This constraint can be satisfied taking $\schan D'=\stoc{p_i}{\synd{E_i}}$ (as long as the correction operators $\omega_\sigma$ have minimal support:
Such a choice of correction operators is valid for these codes, as it yields a threshold for local noise~\cite{gottesman:2014:fault}).
If amended in this way, the proof shows that $\schan D'\in\cchan L_\tau$ and thus
\begin{equation}
\mathrm{eff} (R) \subseteq \cchan L_\tau.
\end{equation}

\bibliography{refs}

\end{document}